\newtheorem{theorem}{Theorem}
\newtheorem{proposition}{Proposition}
\newtheorem{corollary}{Corollary}
\newtheorem{definition}{Definition}
\newtheorem{assumption}{Assumption}
\begin{document}

\title{\huge Wireless Human-Machine Collaboration in Industry 5.0}

\author{Gaoyang Pang, Wanchun Liu*,~\IEEEmembership{Member,~IEEE,}
Dusit Niyato,~\IEEEmembership{Fellow,~IEEE,}
Daniel E. Quevedo,~\IEEEmembership{Fellow,~IEEE},
Branka Vucetic,~\IEEEmembership{Life Fellow,~IEEE,} 
Yonghui Li,~\IEEEmembership{Fellow,~IEEE} 
\vspace{-0.5cm}
\thanks{The work of W. Liu was supported by the Australian Research Council’s Discovery Early Career Researcher Award (DECRA) Project DE230100016. \textit{(Corresponding author: W. Liu.)}}
\thanks{G. Pang, W. Liu, D. Quevedo, B. Vucetic, and Y. Li are with the School of Electrical and Computer Engineering, The University of Sydney, Sydney, NSW 2006, Australia (e-mail: \{gaoyang.pang, wanchun.liu, daniel.quevedo, branka.vucetic, yonghui.li\}@sydney.edu.au).}
\thanks{D. Niyato is with the College of Computing and Data Science, Nanyang Technological University, Singapore 639798, (e-mail: dniyato@ntu.edu.sg).}
} 

\maketitle

\begin{abstract}
Wireless Human-Machine Collaboration (WHMC) represents a critical advancement for Industry 5.0, enabling seamless interaction between humans and machines across geographically distributed systems. As the WHMC systems become increasingly important for achieving complex collaborative control tasks, ensuring their stability is essential for practical deployment and long-term operation. 
Stability analysis certifies how the closed-loop system will behave under model randomness, which is essential for systems operating with wireless communications.
However, the fundamental stability analysis of the WHMC systems remains an unexplored challenge due to the intricate interplay between the stochastic nature of wireless communications, dynamic human operations, and the inherent complexities of control system dynamics.
This paper establishes a fundamental WHMC model incorporating dual wireless loops for machine and human control. Our framework accounts for practical factors such as short-packet transmissions, fading channels, and advanced HARQ schemes. We model human control lag as a Markov process, which is crucial for capturing the stochastic nature of human interactions. Building on this model, we propose a stochastic cycle-cost-based approach to derive a stability condition for the WHMC system, expressed in terms of wireless channel statistics, human dynamics, and control parameters. Our findings are validated through extensive numerical simulations and a proof-of-concept experiment, where we developed and tested a novel wireless collaborative cart-pole control system. The results confirm the effectiveness of our approach and provide a robust framework for future research on WHMC systems in more complex environments.

\end{abstract}

\begin{IEEEkeywords}
Wireless control, Industry 5.0, Human-machine collaboration, Stability analysis.
\end{IEEEkeywords}

\section{Introduction} \label{sec:intro}
\IEEEPARstart{T}{he} Fourth Industrial Revolution, known as Industry 4.0, envisions significantly increased automation and mechanization in manufacturing, driven by rapidly advancing cyber-physical systems (CPS) with minimal human intervention on the factory floor \cite{Industry4.0}. However, many dynamically changing and unforeseen control tasks in manufacturing, such as reconfiguring the production line, are challenging for autonomous machines to handle alone~\cite{RevHRC2}. Therefore, humans are reintroduced to the manufacturing process to collaborate with machines in the fifth industrial revolution, Industry 5.0 \cite{I50}. In the Industry 5.0 era, human-machine collaboration (HMC) emerges as a key enabling technology to boost productivity, efficiency, and sustainability by combining human's creativity, cognitive ability, and dexterity with machine’s strength, precision, and speed \cite{I50}. 
Future wireless communications, e.g., 6G, will be essential to provide high-performance connectivity for humans, machines (including robots), autonomous controllers, and ubiquitous sensors, enabling the flexible, scalable, and low-cost deployment of geographically distributed HMC systems \cite{I50Communication}. Integrating wireless capabilities within an HMC system will unlock the full potential of human-machine collaboration in Industry 5.0, offering unprecedented flexibility and scalability. This wireless HMC (WHMC) framework will serve as the backbone for seamlessly connecting humans, machines, and sensors across geographically distributed environments, enabling real-time collaboration and decision-making.

The main application of WHMC is in collaborative control, where humans and autonomous controllers work together to achieve shared objectives \cite{DefinitionHILC}. WHMC systems enable seamless coordination between human operators and machines, enhancing the efficiency of control tasks. Existing research on WHMC has focused on applied areas such as teleoperation \cite{teleoperation1}, driver assistance systems \cite{ADAS}, and human-machine interaction \cite{HRI}, including scenarios where robots anticipate human intentions and assist in tasks like tool-passing during assembly \cite{DefinitionHILC}. While these efforts have led to successful implementations in specific domains, they often lack the foundational modeling and theoretical analysis needed for broader application \cite{AppliedRes1,AppliedRes2}.

In a WHMC system, stability is a fundamental property that determines whether the controlled states will converge to a steady state and remain bounded under given collaborations. 
Stability analysis is essential for certifying that the closed-loop system will perform safely and effectively, even in the face of human-and-network-induced challenges like random delays and packet loss ~\cite{Liu2023Stability}.
However, the fundamental theories and analytical tools for designing a WHMC system with guaranteed stability are scarce, as this research area is relatively new. Analyzing the stability of a WHMC system presents unique challenges, as it is determined by three tightly coupled domains: wireless communication, human behavior, and control dynamics. Whilst the dynamical properties of individual components are well understood, the stability condition of WHMC systems has yet to be thoroughly investigated.

\subsection{Related Work}
Establishing fundamental theories is important for guiding the systematic design of a desired WHMC system. Researchers have extensively explored theoretical aspects, such as human control modeling, human characteristics modeling, system stability analysis, and wireless networked control.

\subsubsection{Human control modeling}
The primary goal of human control modeling is to mathematically represent how humans perform tasks, enabling machines to understand and adapt to human control policies. This modeling is essential for the design of high-performance machine control systems that can effectively collaborate with human operators. Researchers have proposed various methods to model human control policies. For example, the human operator is often modeled as a classical machine controller, such as linear feedback controller\cite{SimpleHumanModel1,HumanBehaviorMdoeling1}, proportional-integral-feedback controller\cite{SimpleHumanModel2}, and impedance controller\cite{SimpleHumanModel3}. The human control behavior can also be modeled using the crossover-reference model with time-invariant dynamics \cite{HRCinAviation1}, where human operators are characterized as an open-loop transfer function. In addition to the above deterministic models, researchers have also proposed several probabilistic models, such as hidden Markov models (HMMs) \cite{HMMforHuman}, partially observable Markov decision processes (POMDPs) \cite{POMDPforHuman1} and Markov decision processes (MDPs) \cite{MDPforHuman}. Despite significant progress in human control modeling, accurately formulating human control policies mathematically remains a long-lasting unsolved challenge.

\subsubsection{Human characteristics modeling}\label{sec:HMM}
Human characteristics modeling aims to represent the stochastic human traits that implicitly influence the delivery and accuracy of control commands generated by the human decision-making process. These time-varying characteristics include operator workload, proficiency, fatigue, and control lag. For example, the operator's workload can be modeled as a uniform distribution over binary state sets of high and low workload\cite{MDPforHuman}. The operator's fatigue can be modeled as a binary state set (awake or sleepy) with a certain distribution \cite{POMDPforHuman1}. However, the human characteristics in these works are modeled as independent and identically distributed random states. Human characteristics are commonly time-correlated. In order to capture the time-correlated feature, many works model human characteristics as a Markov process \cite{MDPforHuman2} and adopt HMMs to infer human characteristics based on the recorded temporal data \cite{HMMforHuman,HMMforHuman0}. Although using the Markov process to model time-varying human characteristics has garnered significant attention, its application to modeling human control lag has been less considered.\footnote{Modeling the human characteristics impacting the accuracy of human control commands relies on the precise formulation of human control policies, which is a long-lasting unsolved challenge and beyond the scope of our current work \cite{HRCinAviation2}. For a specific collaboration task, the control policy of a human operator commonly remains unchanged in the short term. Thus, we focus on the human control lag, which influences the delivery of human control commands and impacts the collaborative control performance.} The impact of such stochastic human control lag on system performance remains underexplored.

\subsubsection{System stability analysis}
Stability analysis is crucial for designing a WHMC system to operate efficiently and safely. Effective stability analysis requires tractable modeling of the WHMC system. However, most works focus on the fundamental stability analysis of simplified WHMC systems \cite{SimpleHumanModel1,SimpleHumanModel2,SimpleHumanModel3,HRCinDriving1,HRCinAviation1}. In this regard, these works can perform classical analytical frameworks to enable optimal control with a stability guarantee in specific applications, such as irrigation canal \cite{SimpleHumanModel2}, robotic exoskeleton \cite{SimpleHumanModel3}, collaborative driving \cite{HRCinDriving1}, and collaborative piloting \cite{HRCinAviation1}. These limitations make the methodology of most existing works on stability analysis limited to specified control applications, which may weaken the generalization ability of their analytical frameworks. In addition, these systems do not integrate with wireless communication links. Tractable mathematical modeling of advanced WHMC systems with the integration of wireless communication links to establish the stability condition is an unsolved problem.

\subsubsection{Wireless networked control}
Wireless networked control involves integrating autonomous control systems with wireless communication networks. It primarily focuses on establishing systematic theories related to the stability and optimization of state estimation and automatic control over wireless networks \cite{RevWNC,liu2020over,chen2023structure}. Existing research has largely concentrated on developing optimal control algorithms that address the challenges posed by imperfect wireless communication channels, such as errors and delays \cite{WNC1,WNC2}. Some studies investigate the impact of communication protocols and parameters on the stability of automatic control systems \cite{Liu2020WNC1,Liu2020WNC2}. WHMC extends wireless networked control by incorporating human intelligence into the control loop, enhancing system adaptability and performance. While traditional wireless networked control focuses on how communication systems affect control stability, it does not account for the complexities introduced by human operators. Consequently, existing methods in wireless networked control are insufficient for WHMC systems, which require new approaches to address the challenges posed by integrating human factors into the control process. 

\subsection{Motivation}
A WHMC system is significantly more complex than a conventional control system. This complexity arises from the integration of wireless human control loops, the need for collaborative control, and the challenge of addressing time-varying and unforeseen tasks. In a WHMC system, the wireless communication links, the human operator, and the automatic machine controller collectively work to achieve dynamic control objectives under stringent stability constraints. This creates a novel networked topology with tightly coupled wireless human and machine control loops. The system's stability and performance are critically influenced by three factors: wireless communication errors and delays, the stochastic nature of human behavior, and the dynamics of the physical system under control. We name these three factors as the ``three-level dynamics''. Addressing these factors presents a unique challenge in modeling and stability analysis for WHMC systems. To date, the impact of these dynamics on WHMC system stability has not been investigated at all.

Fundamental modeling and analysis of a WHMC system, which features a substantially different control model, requires addressing the following fundamental questions:
\begin{enumerate}
\item How can we achieve tractable mathematical modeling of a WHMC system that effectively captures the three-level dynamics?
\item How can we establish an analytical framework for stability analysis when an accurate mathematical model of the human control policy is unavailable?
\item What are the primary conditions within the three-level dynamics that enable the stable operation of a WHMC system?
\end{enumerate} 

\subsection{Contributions}
In this work, we address the fundamental questions outlined above, and our novel contributions are summarized below.
\begin{enumerate}
    \item \textbf{Novel tractable modeling of the WHMC system}. 
    For the first time, we propose a WHMC model that consists of dual wireless loops, i.e., the machine control loop and the human control loop. In particular, we have taken into account practical wireless communication factors such as short-packet communications, fading channel models and advanced hybrid automatic repeat request (HARQ) schemes for wireless sensors-controller-actuator transmission (referred to as the machine control loop) and sensors-human-actuator transmission (referred to as the human control loop).
    Unlike most existing HMC studies, which typically overlook the temporal variability and stochastic nature of human interactions, we model the dynamics of human control lag as a Markov process.
    \item \textbf{Stability analysis of the WHMC system.} Leveraging the proposed system model, we introduce a novel cycle-cost-based approach to derive a sufficient condition for the stochastic stability of the WHMC system for the first time. This stability condition is expressed in terms of wireless channel statistics, human state dynamics, and control system parameters. We thoroughly investigate the structural properties and special cases of the derived stability condition, providing comprehensive analysis and numerical illustrations.
    \item \textbf{Proof-of-concept experiment for the proposed WHMC system.} To demonstrate the advantages of WHMC and validate the developed fundamental theories and analytical tools, a proof-of-concept experiment is conducted. Specifically, we develop and evaluate a wireless collaborative cart-pole control system in terms of control performance and system stability. The experiment confirms the practicality of our approach and provides the validation of the theoretical framework, which is set in 1) and 2).
\end{enumerate}

\textbf{Outline.} The proposed model of the WHMC system is described in Section~\ref{sec:sys}. The stability analysis is presented in Section~\ref{sec:StabilityAnalysis}. A proof-of-concept experiment is demonstrated in Section~\ref{sec:simulation}, followed by conclusions in Section~\ref{sec:conclusion}.

\textbf{Notations.} Matrices and vectors are denoted by capital and lowercase upright bold letters, e.g., $\mathbf{A}$ and $\mathbf{a}$, respectively. $|\mathbf{v}|$ is the Euclidean norm of vector $\mathbf{v}$. $\mathbb{E}\left[\cdot\right]$ is the expectation operator. $\left[\mathbf{A}\right]_{i,j}$ denotes the element at $i$-th row and $j$-th column of a matrix $\mathbf{A}$. $(\cdot)^\top$ is the vector or matrix transpose operator. $\mathbb{R}$ and $\mathbb{N}$ denote the sets of real numbers and positive integers, respectively. $\mathbb{N}_{0}$ denotes the non-negative integers. 

\section{WHMC System} \label{sec:sys}
\subsection{Control System Dynamics}
\begin{figure}[!t]
	\centering\includegraphics[width=2.9in]{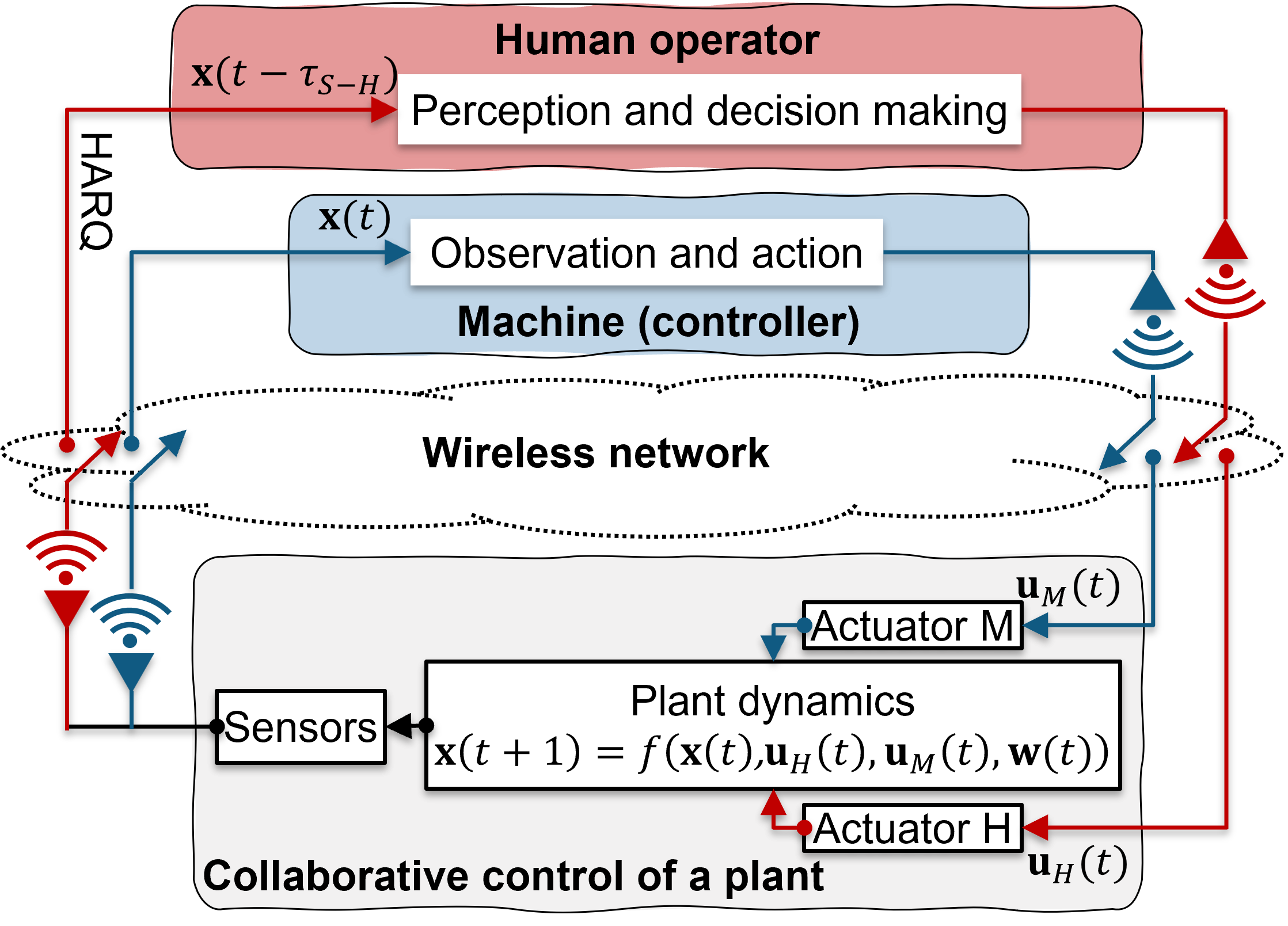}
	\vspace{-0.4cm}
	\caption{Illustration of the WHMC system, consisting of two types of control loops, i.e., the machine control loop and the human control loop.}
	\label{fig:SysModel}
	\vspace{-0.5cm}
\end{figure}
We consider a WHMC system consisting of a dynamic plant, two actuators, an autonomous controller (i.e., a machine), and a human operator, as shown in Fig.~\ref{fig:SysModel}. The sensors attached to the plant send state measurements to the remote controller and the human operator. These two agents then send their individual control signals to the corresponding actuators in order to complete a collaborative control task of the plant. All information for sensing and control is exchanged via four wireless links: sensor-human (SH) uplink, sensor-controller (SC) uplink, human-actuator (HA) downlink, and controller-actuator (CA) downlink. Such a system model has two types of control loops, i.e., the machine control loop and the human control loop. It is abstracted from the existing visions of HMC systems, e.g., homecare robotic systems for Healthcare 4.0 \cite{HRCinHomecare}, factory edge robotic systems for Industrial 5.0 \cite{I50Communication}, collaborative surgery in healthcare \cite{CoDesign}, collaborative piloting in aviation \cite{HRCinAviation1}, and collaborative driving in a vehicle \cite{HRCinDriving1}. These systems require a human operator to control an actuator as well as collaborate with other machine-controlled actuators. 

Having two loops in parallel allows one to clearly distinguish between human and machine contributions and enables individual analysis of each loop's dynamics and their interactions. Our model can also adjust the degree of influence each loop has, allowing for a spectrum of control schemes, such as human-in-the-loop, supervisory, and shared control.\footnote{For example, if the time period of a human control loop is far longer than that of a machine control loop, our model becomes supervisory control, where the machine is predominantly responsive. If the time period of a machine control loop is longer than that of a human control loop, it can be seen as human-in-the-loop control, where the human operator is predominantly responsive. If the time period of a machine control loop is close to that of a human control loop, our model encompasses shared control, where both the human operator and the machine contribute significantly.} 

The plant dynamics is modeled as a nonlinear discrete time-invariant system
\begin{equation}\label{NLTI}
    \mathbf{x}(t+1) =f(\mathbf{x}(t), \mathbf{u}_{H}(t), \mathbf{u}_{M}(t),\mathbf{w}(t)),
\end{equation}
where $t$ is the time index given the sampling period $T_s$; $\mathbf{x}(t)\in\mathbb{R}^{l_s}$ is the plant state vector at time $t$; $\mathbf{u}_{H}(t) \in \mathbb{R}^{l_h}$ and $\mathbf{u}_{M}(t) \in \mathbb{R}^{l_m}$ are the corresponding human control input and machine control input, respectively; $\mathbf{w}(t)\in \mathbb{R}^{l_w}$ is the plant disturbance. The control algorithms for generating control inputs will be presented later in this section.

\subsection{Wireless Control Loops}\label{sec:channel}
The temporal operation of the two control loops is shown in Fig.~\ref{fig:TwoTimeScaleControlLoops}. We assume block Rayleigh fading channels, where the channel characteristics remain constant during each time slot but change independently from one time slot to the next. 
\begin{figure}[!t]
	\centering\includegraphics[width=3.5in]{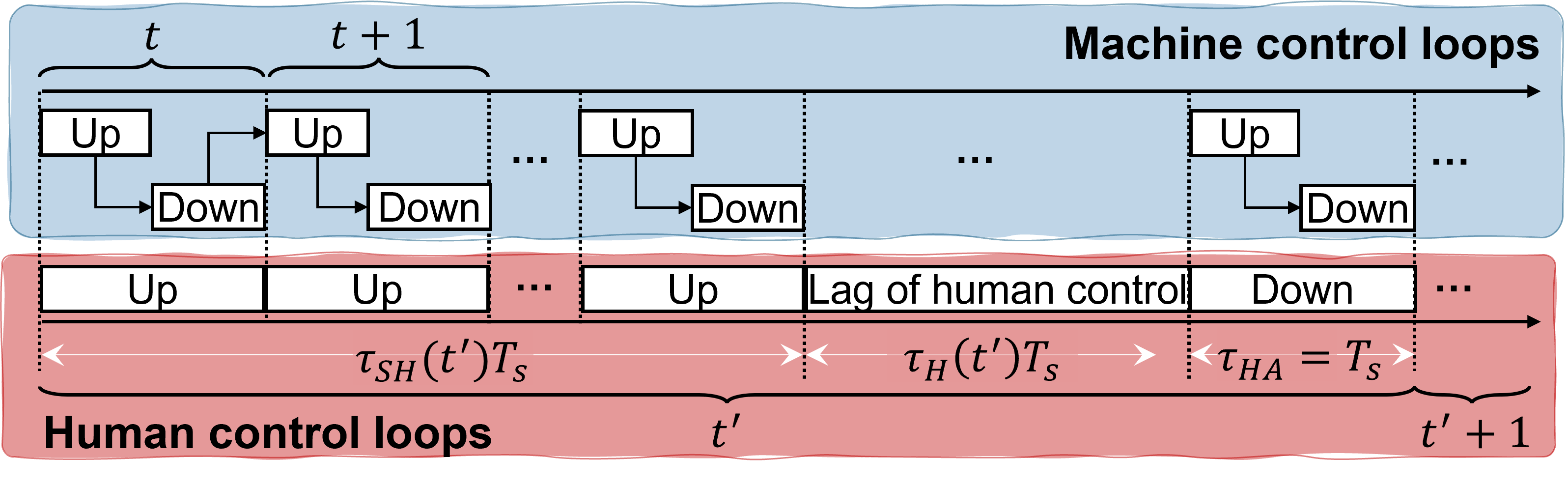}
	\vspace{-0.8cm}
	\caption{Temporal operation of the two control loops.}
	\label{fig:TwoTimeScaleControlLoops}
	\vspace{-0.5cm}
\end{figure}

\subsubsection{Machine control loops}
Each machine control loop takes a single time step, i.e., the period of a machine control loop is $T_s$, and consists of a pair of SC uplink and CA downlink transmissions. If the SC packet is not detected successfully, there is no CA transmission scheduled as the controller has no instantaneous plant state information. We consider short-packet transmissions for low-latency communications~\cite{Polyanskiy2010BLER}. The computation time for generating a control signal is usually much shorter than the transmission delay and thus is omitted~\cite{OmittedCompute1,OmittedCompute2}.
\emph{A machine control loop is closed only when both the SC and CA transmissions within it are successful.}

\subsubsection{Human control loops} \label{sec:introHumanCtrlLoop}
Each human control loop period is delineated by an HA downlink transmission, as illustrated in Fig.~\ref{fig:TwoTimeScaleControlLoops}. A human control loop contains multiple SH uplink transmissions, a human control procedure, and an HA downlink transmission. A downlink transmission attempt marks the end of one human control loop period and the beginning of the next.
Each period starts from a new packet transmission from the sensors, which contains the current plant state measurement. If the transmission fails, then a retransmission takes place using a HARQ protocol. In instances where a given maximum number of retransmissions $N$ has been reached, a new transmission is triggered.\footnote{Unlike machine control loops, the lag of human control, which captures the delay in human decision-making, is significantly longer than the transmission delay~\cite{LagOverTransmission}. This results in frequent machine control actions and infrequent human interventions. In contexts where the lag of human control is substantial, the transmission delay becomes relatively insignificant. Consequently, retransmissions are used to improve transmission reliability, as a longer transmission delay caused by retransmissions does not notably affect the overall human control process.} The human operator generates a control command after receiving a successful packet, and then sends the command to the actuator.
There is no retransmission for the HA and CA downlinks, since retransmissions lead to unpredictable delays, making the generated time-sensitive control command useless.
Let $t'$ denote the human control loop index. Then, the transmission delay for the SH and HA transmissions are $\tau_{SH}(t')$ and $\tau_{HA}=1$, respectively, and the lag of human control is $\tau_{H}(t') \in \mathcal{S} \triangleq \{ 1,2,\dots,\tau_{\mathrm{max}}\}$.
In particular, $\{\tau_{H}(t')\}$ is modeled as a finite state Markov chain with a transition probability matrix $\mathbf{M}$, where $p_{i,j} \triangleq [\mathbf{M}]_{i,j}$. A shorter lag of human control leads to better control performance. The stationary distribution of $\tau_{H}(t')$ is given as
\begin{equation} \label{eq:DistriHumanState}
v_{k} \triangleq \mathbb{P}\!\left[ \tau_{H}(t')\!=\!k \right]\!, 1 \leq k \leq \tau_{\text{max}}.
\end{equation}
We assume each transmission in a human control loop takes one time step because human-type communication generally requires a larger packet length than machine-type communication \cite{LargePacket}.
Considering the random period of each human control loop, we define $\kappa(t')$ as the starting time slot of the $t'$-th human control loop.
\emph{The human control loop is closed once the HA transmission is successful.}

\subsection{Control Algorithms}\label{sec:CollaborativeControlModel}
Due to packet detection errors, the sensor's packet for the remote controller may not be received by the remote controller, and the machine control input may not be received by the actuator at every time step.
Let the binary variables $\zeta_{SC}(t), \zeta_{CA}(t),\zeta_{SH}(t'),\zeta_{HA}(t') \in \{1,0\}$ denote the transmission success and failure of the corresponding channel at $t$, respectively.
The \textbf{machine control input} at $t$ is given as
\begin{equation}\label{eq:MachineControl}
\mathbf{u}_{M}(t)=\begin{cases}
f_M \left(\mathbf{x}(t) \right), & \text { if } \zeta_{CA}(t)\zeta_{SC}(t)=1,\\
\mathbf{0}, & \text {otherwise,}
\end{cases}
\end{equation}
where $f_M(\cdot)$ is the machine control policy.
Hence, only a pair of successful uplink and downlink transmissions can generate an effective control input, closing the machine control loop.

From the definition of human control loops, a human control input can only be available at the beginning of each control loop. Considering the random delay of SH transmissions and human decision-making, the \textbf{human control input}  at $t$ is
\begin{equation}\label{eq:HumanControl}
\mathbf{u}_{H}(t)\!=\!\begin{cases}
    f_{H} (\mathbf{x}(t\!-\!\tau_{SHA}(t')),\!& \text {for } t\!=\!\kappa(t') \text { and } \zeta_{HA}(t')\!=\!1 \\
    \mathbf{0},\!& \text {otherwise,}
\end{cases}
\end{equation}
where $f_H(\cdot)$ is the human control policy and 
\begin{equation} \label{eq:DuraH}
\tau_{SHA}(t') \triangleq \underbrace{\tau_{SH}(t')}_\text{SH Tx. delay}+ \underbrace{\tau_{H}(t')}_\text{Human control lag}+\underbrace{\tau_{HA}}_\text{HA Rx. delay}.
\end{equation}
As an accurate model of the human control policy is unavailable, we propose an analytical framework for stability analysis without using specific control policies of human and machine, but using their control significance in the next section.

\section{Stability Analysis} \label{sec:StabilityAnalysis}
From \eqref{eq:MachineControl}--\eqref{eq:DuraH}, we see that the two control loops can be either open or closed due to the packet loss and delays, which may cause instability of the WHMC system. In this section, we derive the stability condition of the proposed WHMC system by taking into account the randomness in wireless communications and human decision-making.
Since only closed control loops generate effective control inputs that regulate plant state and affect stability, we analyze the statistics of the stochastic closed (and open) control loop first.

\subsection{Stochastic Control Loop Analysis} \label{sec:LoopAnalysis}
\subsubsection{Open loop probabilities of human and machine control}
Let $\gamma_{HA}(t)$, $\gamma_{SH}(t)$, $\gamma_{CA}(t)$, and $\gamma_{SC}(t)$ denote the signal-to-noise ratio (SNR) of received packets in HA, SH, CA, and SC channels, respectively. 
Given the packet length $l_p$ (i.e., the number of symbols per packet), the number of data bits $b$ in the packet, and the SNR $\gamma$ of the packet, we have the approximated decoding error probability of a packet as \cite{PangFBL}
\begin{equation}\label{BLER:NoHARQ}
\varepsilon\left(\gamma\right) \approx \mathcal{Q}\left(\frac{\mathcal{C}\left(\gamma\right)-\frac{b}{l_p}}{\sqrt{\frac{\mathcal{V}\left(\gamma\right)}{l_p}}}\right),
\end{equation}
where $\mathcal{C}(\gamma)=\log_2{(1+\gamma)}$ and $\mathcal{V}(\gamma)=(1-(1+\gamma)^{-2})(\log_2{e})^2$ are the Shannon capacity and the channel dispersion, respectively, and $\mathcal{Q}(x)=(\frac{1}{\sqrt{2\pi}})\int_{x}^{\infty}{e^{-\frac{t^2}{2}}\text{d}t}$ is the Gaussian Q-function. 

The probability of the machine control operating in an open loop at time $t$ can be obtained as
\begin{equation}\label{Pr_OpenMachineLoop}
\begin{aligned}
    p_M(t) &= \mathbb{P}\!\left[ \zeta_{SC}(t) = \zeta_{CA}(t) = 1 \right] \\ 
    &= 1 - \left(1 - \varepsilon\left(\gamma_{SC}\left(t\right)\right)\right) \left(1 - \varepsilon\left(\gamma_{CA}\left(t\right)\right)\right).
\end{aligned}
\end{equation}
The expectation of \eqref{Pr_OpenMachineLoop} with respect to $\gamma_{SC}\left(t\right)$ and $\gamma_{CA}\left(t\right)$ is denoted as $\Bar{p}_M$, and can be obtained by
\begin{equation}\label{ExPr_OpenMachineLoop}
 \Bar{p}_M\!\triangleq\!\mathbb{E}\!\left[p_M(t)\right]\!=\!1\!-\!\left(1\!-\!\mathbb{E}\!\left[\varepsilon\!\left(\!\gamma_{SC}(t)\right)\right]\right)\!\left(\!1\!-\!\mathbb{E}\!\left[\varepsilon\!\left(\!\gamma_{CA}(t)\right)\right]\right)\!.  
\end{equation}

Since each human control loop contains a successful SH packet, the probability of an open human control loop only depends on the HA transmission and is given by 
\begin{equation}\label{Pr_OpenHumanLoop}
    p_H(t') = \mathbb{P}\!\left[ \zeta_{HA}(t')\!=\!0 \right] = \varepsilon\left(\gamma_{HA}(\kappa(t'+1)-1)\right).
\end{equation}
The expectation of \eqref{Pr_OpenHumanLoop} with respect to $\gamma_{HA}(\kappa(t')-1)$ is denoted as $\Bar{p}_H$, and can be obtained by
\begin{equation}\label{ExPr_OpenHumanLoop}
    \Bar{p}_H \triangleq \mathbb{E}\!\left[p_H(t')\right] = \mathbb{E}\!\left[\varepsilon\left(\gamma_{HA}(\kappa(t')-1)\right)\right].
\end{equation}

\begin{figure}[!t]
	\centering\includegraphics[width=3.3in]{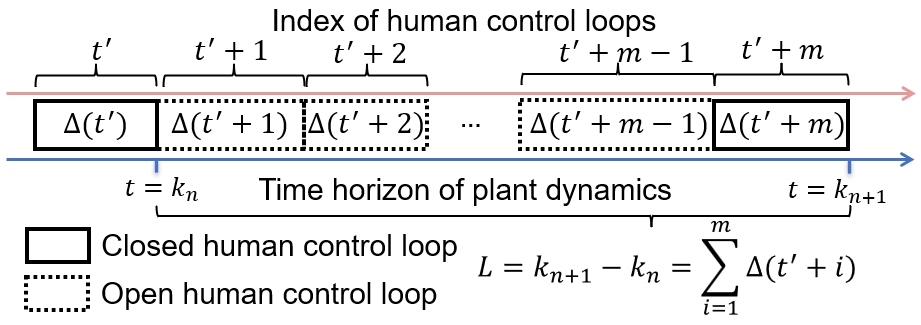}
	\vspace{-0.4cm}
	\caption{Illustration of the time horizon of plant dynamics between two adjacent closed human control loops. $\Delta(\cdot) = \tau_{SH}(\cdot) + \tau_{H}(\cdot) + \tau_{HA}$}
	\label{fig:CycleBasedAnalysis.png}
	\vspace{-0.5cm}
\end{figure}

\subsubsection{Distribution of the SH delay}
The duration of a human control loop $\tau_{SHA}(t')$ in \eqref{eq:DuraH} includes the SH channel delay $\tau_{SH}(t')$, human control lag $\tau_{H}(t')$, and the HA channel delay $\tau_{HA}$. The HA channel delay is constant across human control loops, while the human control lag is time-correlated across human control loops due to the Markovian property. The SH channel delay is attributed to the HARQ and i.i.d across all human control loops. We analyze the distribution of the SH channel delay before proceeding with the distribution of the duration of consecutive time steps where the human control loop is open. 
We consider the following three types of HARQ schemes for the SH channel, including Type I HARQ (TI-HARQ), Chase Combing HARQ (CC-HARQ), and Incremental Redundancy HARQ (IR-HARQ).\footnote{In TI-HARQ, the packet is re/transmitted for all re/transmissions, and all erroneously decoded packets are discarded at the receiver side. All decoding attempts during re/transmissions of the packet are independent. In CC-HARQ, all erroneously decoded packets in previous re/transmissions are saved and their signals are combined together as a single strengthened signal for decoding. In IR-HARQ, the packet in each re/transmission is a punctured version of a low-rate mother packet. If errors occur, it only retransmits the additional redundancy for the previous uncorrectable packets. The newly received redundancy is combined with the previously received packets to construct a packet with a longer length for decoding.}

The number of re/transmission attempts is $r \in \{1,2,\dots,N\}$. Let $\boldsymbol{\gamma}_{r}(\kappa(t'-1),r)$ denote the set of experienced SNRs during $r$ re/transmission attempts, that is 
\begin{equation} \label{eq:SetOfGamma}
\boldsymbol{\gamma}_{r}(\kappa(t'\!-\!1),r)\!\triangleq\!\{\!\gamma_{SH}(\kappa(t'\!-\!1)\!),\dots,\gamma_{SH}(\kappa(t'\!-\!1)\!+\!r\!-\!1\!)\!\}.
\end{equation}
The decoding error probability of the packet after $r$ re/transmission attempts $\Theta(r)$ is an expectation over \eqref{eq:SetOfGamma}, and can be approximated as \cite{HARQ,HARQ1,HARQ2}
\begin{equation}\label{EqBLERinSHchannel}
\begin{aligned}
    \Theta(r)\!&\triangleq\!\mathbb{P}\!\left[ \zeta_{SH}(t')\!=\!0 \mid\boldsymbol{\gamma}_{r}(\kappa(t'-1),r)\right] \\
    &\approx\!\begin{cases}\!
    \prod_{i=0}^{r-1}\varepsilon\left(\gamma_{SH}(\kappa(t'-1)+i)\right)\!, &\!\text{\small TI-HARQ}, \\
    \varepsilon\left(\sum_{i=0}^{r-1}\gamma_{SH}(\kappa(t'-1)\!+\!i)\!\right)\!, &\!\text{\small CC-HARQ}, \\
    \mathcal{Q}\!\!\left(\!\!\frac{\sum_{i=0}^{r-1}\mathcal{C}\left(\gamma_{SH}(\kappa(t'\!-\!1)+i)\right)-\frac{b}{l_p}}{\sqrt{\frac{\sum_{i=0}^{r-1}\mathcal{V}\left(\gamma_{SH}(\kappa(t'-1)+i)\!\right)}{l_p}}}\!\!\right)\!\!,&\!\text{\small IR-HARQ}.
\end{cases}
\end{aligned}
\end{equation}
To facilitate our subsequent analysis, we assume that all packets have the same length $l_p$. For CC-HARQ, since the channel gain is exponentially distributed, $\sum_{i=0}^{r-1}\gamma_{SH}(\kappa(t'+i))$ is gamma distributed with the probability distribution function \cite{GammaDistribution}
\begin{equation}\label{PDFforSumRayleigh}
\mathbb{P}\!\left[ \sum_{i=0}^{r-1}\gamma_{SH}(\kappa(t'+i)) = \hat{\gamma} \right]\! = \frac{\frac{1}{\Bar{\gamma}^{r}} \hat{\gamma}^{r-1} e^{-\frac{\hat{\gamma}}{\Bar{\gamma}}}}{\left( r-1 \right)!},
\end{equation}
where $\Bar{\gamma}$ is the mean of the exponential distribution.
Thus, for TI- and CC-HARQ, $\Theta(r)$ is obtained by leveraging \eqref{BLER:NoHARQ}, \eqref{EqBLERinSHchannel}, and \eqref{PDFforSumRayleigh}. For IR-HARQ, $\Theta(r)$ can be determined by Monte Carlo simulations. 
\begin{figure*}[!t]
\normalsize
\begin{equation}\label{eq:S-Hdelay}
\!w_{k} \triangleq \mathbb{P}\!\left[ \tau_{SH}(t')\!=\!k \right]\!
\!=\!\begin{cases}
    \!\left(\Theta(N)\right)^{q} \!\left(1\!-\!\Theta(k-qN)\right), &\!\text{for } k\!=\!qN\!+\!1, \\
    \!\left(\Theta(N)\right)^{q} \!\left(\Theta(k-qN-1)\!-\!\Theta(k-qN) \right), &\!\text{for } qN\!+\!2\!\leq k\!\leq (q\!+\!1)N, \\
    0, & \text{otherwise.}
   \end{cases} 
\end{equation}
\vspace{-0.8cm}
\end{figure*}
The delay induced by the SH transmission period $\tau_{SH}(t')$ is $k \in \mathbb{N}_0$. Note that the SH transmission period may contain multiple $N$-trails of the retransmission process as described in Section~\ref{sec:introHumanCtrlLoop}, and the number of experienced $N$-trails is $q \in \mathbb{N}_0$. 
The probability distribution of $\tau_{SH}(t')$ is then given as \eqref{eq:S-Hdelay}.
When $q = 0$, the SH transmission is successful in the first $N$-trials. In this case, if $2 \leq k \leq N -1$, $\tau_{SH}(t')\!=\!k$ indicates that the first $k - 1$ trials have failed and the $k$-th transmission attempt is successful. When $q > 0$, the SH transmission is successful in the $\left(q + 1\right)$th $N$-trials, while the former $q$th $N$-trials are decoded erroneously.

\subsubsection{Time interval distribution between consecutive closed human control loops}
We denote the starting time of the $n$th closed human control loop as $t = k_n$, as shown in Fig.~\ref{fig:CycleBasedAnalysis.png}. 
Let $L$ and $M$ denote time steps and the numbers of (open or closed) human control loops between $k_n$ and $k_{n+1}$, respectively, i.e., 
\begin{equation} \label{eq:TimeInterval}
    L \triangleq \sum_{i=1}^{M} \tau_{SH}(t' + i) + \sum_{i=1}^{M} \tau_{H}(t' + i) + M,
\end{equation}
where $t'$ is the index number of the $n$th closed human control loop among all the loops.
The probability distribution of $L$ in \eqref{eq:TimeInterval} can be expressed as 
\begin{equation}\label{eq:MarkovCompelexity}
    z_{l} \triangleq \mathbb{P}\!\left[L \!=\!l \right]\!= \sum_{m = 1}^\infty  \mathbb{P}\!\left[L = l \mid M = m \right] \mathbb{P}\!\left[ M\!=\!m \right].
\end{equation}
The probability distribution of the number of consecutive open human control loops in \eqref{eq:MarkovCompelexity} can be expressed as 
\begin{equation}\label{eq:ProbM}
    \mathbb{P}\!\left[ M\!=\!m \right]\!= (1-\Bar{p}_H) (\Bar{p}_H)^{m-1}, m \in \mathbb{N},
\end{equation}
where $\Bar{p}_H$ is defined in \eqref{ExPr_OpenHumanLoop}.
The time interval distribution of $L$ under the condition with $m$ open human control loops in \eqref{eq:MarkovCompelexity} consists of two independent and stochastic parts, i.e., the total delay induced by SH channel and human control lag. In the following, we analyze the conditional probabilities of the two parts.
The conditional probabilities of the delay induced by the SH channel can be expressed as
\begin{equation} \label{eq:DistriS-H}
\begin{aligned}
    w_{k,m} & \triangleq \mathbb{P}\!\left[\sum_{i=1}^{m} \tau_{SH}(t' + i) = k \mid M = m \right] \\
    &= \begin{cases}
        \sum_{i=1}^{k} w_{i,m-1}w_{k-i+1}, & \text{ for } m > 1, \\
        w_k, & \text{ for }  m = 1,
        \end{cases}
\end{aligned}
\end{equation}
where $w_k$ is defined in \eqref{eq:S-Hdelay}.
The conditional probabilities of the delay induced by the human control lag can be expressed as \eqref{eq:DistriH-A},
\begin{figure*}[!t]
\normalsize
\begin{equation}\label{eq:DistriH-A}
\begin{aligned}
    v_{k,m} \triangleq \mathbb{P}\!\left[\tau_{H}(t' + i) = k \mid M = m \right]\! &= \begin{cases}
               \sum_{\delta_1 + \dots + \delta_m = k} \mathbb{P}\!\left[\tau_{H}(t'+1) = \delta_1, \dots, \tau_{H}(t' + m) = \delta_m\right], & \text{ for }  m > 1,\\
               v_k,  & \text{ for }  m = 1.
           \end{cases} \\
           & = \begin{cases}
               \sum_{\delta_1 + \dots + \delta_m = k} v_{\delta_1}p_{\delta_1,\delta_2}p_{\delta_2,\delta_3} \dots p_{\delta_{m-1},\delta_m}, & \text{ for }  m > 1,\\
               v_k,  & \text{ for }  m = 1.
           \end{cases} 
\end{aligned}
\end{equation}
\hrulefill
\vspace{-0.6cm}
\end{figure*}
where $\delta_m \in \mathcal{S}$, $v_k$ is defined in \eqref{eq:DistriHumanState}, and $p_{\delta_{m-1},\delta_m} = [\mathbf{M}]_{\delta_{m-1},\delta_m}$.
Then, the time interval distribution under the condition with $m$ open human control loops is
\begin{equation} \label{eq:CondiProb}
\begin{aligned}
    z_{l,m} & \triangleq \mathbb{P}\!\left[L = l \mid M = m \right] \\
    &=\begin{cases}
    \sum_{k=1}^{l-m} w_{k,m}v_{l -m -k +1,m}, & \text{for } l>m \\
    0, & \text{otherwise,}
    \end{cases}
\end{aligned}
\end{equation}
where $w_{k,m}$ and $v_{k,m}$ are conditional probabilities defined in \eqref{eq:DistriS-H} and \eqref{eq:DistriH-A}, respectively.
In summary, by using \eqref{eq:DistriS-H} and \eqref{eq:DistriH-A}, we can obtain \eqref{eq:CondiProb}. By substituting \eqref{eq:ProbM} and \eqref{eq:CondiProb} into \eqref{eq:MarkovCompelexity}, we can obtain the time interval distribution between consecutive closed human control loops.

\subsection{Stability Condition of WHMC}
Lyapunov functions are powerful tools used for stability analysis in dynamic systems without needing explicit control policies. A function $V: \mathbb{R}^{l_s}\rightarrow\mathbb{R}_{\geq0}$ is said to be a Lyapunov-like function, if $V(0) = 0$, $V\left(\mathbf{x}(t)\right)>0$ for $\mathbf{x}(t)\neq0$, and $\lim_{||\mathbf{x}(t)||\rightarrow\infty}V\left(\mathbf{x}(t) \right)= \infty$. It is a scalar function that can be treated as a cost function associated with the system state $\mathbf{x}(t)$. For example, the function $V(\mathbf{x}(t))$ can be the magnitude of the input vector $\mathbf{x}(t)$. The dynamic system is stable if the expected cumulative cost over an infinite time horizon remains bounded. Thus, we have the following definition.
\begin{definition}[Stochastic Stability \cite{LiuStability,Assumption1,Assumption2}]\label{def:stability}
	\normalfont
	The wireless networked human-machine collaborative system is stochastically stable, if for some Lyapunov-like functions $V$: $\mathbb{R}^{l_s}\rightarrow\mathbb{R}_{\geq0}$, the expected value $\sum_{t=0}^{\infty} \mathbb{E}\left[V\left(\mathbf{x}(t)\right)\right] < \infty$. 
\end{definition}
 
From \eqref{Pr_OpenMachineLoop}, \eqref{Pr_OpenHumanLoop} and \eqref{eq:DistriS-H}, we note that the WHMC system randomly switches between the following four cases: 1) Case one: both the machine control loop and the human control loop are closed; 2) Case two: only the machine control loop is closed; 3) Case three: only the human control loop is closed; and 4) Case four: both the machine control loop and the human control loop are open. We next examine the stability condition taking into account each individual case.

For tractable analysis, we make the following assumption.
\begin{assumption}[Lyapunov-Like Function Gains]\label{assump:controlGains}
    \normalfont
    There exists a Lyapunov-like function $V$: $\mathbb{R}^{l_s}\rightarrow\mathbb{R}_{\geq0}$, non-negative control system parameters $\alpha_{HM}\in \mathbb{R}_{\geq0}$, $\alpha_{M}\in \mathbb{R}_{\geq0}$, $\alpha_{H}\in \mathbb{R}_{\geq0}$, and $\alpha \in \mathbb{R}_{>0}$, such that for all $\mathbf{x}(t)$ following \eqref{NLTI} and the initial plant state satisfying $\mathbb{E}\left[V (\mathbf{x}(0))\right] < \infty$, we have
    \begin{equation} \label{eq:HMCsystemParameters}
            V(\mathbf{x}(t+1)) \leq \begin{cases}
                \alpha_{HM} V(\mathbf{x}(t)), & \text{for case one}, \\
                \alpha_{M} V(\mathbf{x}(t)), & \text{for case two}, \\
                \alpha_{H} V(\mathbf{x}(t)), & \text{for case three}, \\
                \alpha V(\mathbf{x}(t)), & \text{for case four}.
            \end{cases}
    \end{equation}
\end{assumption}
Assumption~\ref{assump:controlGains} bounds the one-step cost function ratio between $V(\mathbf{x}(t+1))$ and $V(\mathbf{x}(t))$ in the four cases based on the Lyapunov gains, $\alpha_{HM}$, $\alpha_{M}$, $\alpha_{H}$, and $\alpha$. Note that Lyapunov gains are often assumed in non-linear system stability 
analysis~\cite{LiuStability,Assumption1,Assumption2}.
If a ratio is less than $1$, then the cost decreases; otherwise, it increases.
Considering extreme cases, if all ratios in the four cases are less than 1, the WHMC system is directly stabilized, as the cost in all cases decreases over time. Conversely, if all ratios are significantly larger, the system may not stabilize according to Definition~\ref{def:stability}.
The control system parameters $\alpha_{HM}$, $\alpha_{M}$, $\alpha_{H}$, and $\alpha$ are determined by the plant dynamics~\eqref{NLTI} and the control algorithms \eqref{eq:MachineControl} and \eqref{eq:HumanControl}.


\subsubsection{Stability condition}\label{sec:StochasticAnalysis}
In the following, we propose a stochastic cycle-cost-based approach to obtain sufficient stability conditions for the WHMC system.
\begin{theorem}\label{threm:stability}
	\normalfont
    The plant of the WHMC system defined in Section~\ref{sec:sys} is stochastically stable if 
    \begin{equation} \label{eq:Stability}
   \mathbb{E}\left[\!\left(\alpha_M\!\left(\!1\!-\!\Bar{p}_M\right)\!+\!\alpha \Bar{p}_M\!\right)^{L}\!\right]\!\left(\!\alpha_{HM}\!\left(\!1\!-\!\Bar{p}_M\!\right)\!+\!\alpha_H\Bar{p}_M\!\right)\!<\!1,
    \end{equation}
    where $\Bar{p}_M$ is the expected probability of an open machine control loop defined in \eqref{ExPr_OpenMachineLoop}; the control system parameters $\alpha_{HM}$, $\alpha_{M}$, $\alpha_{H}$, and $\alpha$ are defined in Assumption~\ref{assump:controlGains}; $L$ is the random time interval between consecutive closed human control loops with the probability distribution defined in \eqref{eq:MarkovCompelexity}.
\end{theorem}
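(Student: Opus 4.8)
\emph{Proof proposal.} The plan is a stochastic cycle-cost (renewal-type) argument anchored at the epochs $k_0<k_1<\cdots$ at which the human control loop is closed (Fig.~\ref{fig:CycleBasedAnalysis.png}). The goal is twofold: (i) establish a geometric contraction of the expected Lyapunov cost across consecutive epochs, $\mathbb{E}[V(\mathbf{x}(k_{n+1}))]\le\rho\,\mathbb{E}[V(\mathbf{x}(k_n))]$ with $\rho$ equal to the left-hand side of~\eqref{eq:Stability}; and (ii) bound the cost accrued \emph{within} each cycle by a multiple of $\mathbb{E}[V(\mathbf{x}(k_n))]$, so that $\sum_{t\ge 0}\mathbb{E}[V(\mathbf{x}(t))]$ is dominated by a convergent geometric series. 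The tools are the scalar one-step bounds of Assumption~\ref{assump:controlGains}, the independence of the SC/CA channel states across time slots and from the SH/HARQ delay and the past plant state, and the cycle-length law of $L$ in~\eqref{eq:MarkovCompelexity}.

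\emph{Per-cycle drift.} Iterating~\eqref{eq:HMCsystemParameters} gives, pathwise, $V(\mathbf{x}(k_{n+1}))\le\bigl(\prod_{t=k_n}^{k_{n+1}-1}\alpha_{\mathrm{case}(t)}\bigr)V(\mathbf{x}(k_n))$, where $\alpha_{\mathrm{case}(t)}\in\{\alpha_{HM},\alpha_M,\alpha_H,\alpha\}$ records which of the four cases occurs at step $t$. Along a cycle of length $L=k_{n+1}-k_n$ the human loop is closed at exactly one step (the one carrying the fresh human command) and open at the others, whereas the machine loop is closed/open independently at each step with step-wise open probability averaging to $\bar p_M$ of~\eqref{ExPr_OpenMachineLoop}, independently across steps and independently of both $L$ and $\mathbf{x}(k_n)$ (it depends only on the SC/CA SNRs). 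Conditioning on $(L,\mathbf{x}(k_n))$ and averaging over the machine-loop randomness then factorizes over the steps, yielding
\[
\mathbb{E}\!\left[V(\mathbf{x}(k_{n+1}))\,\middle|\,L,\mathbf{x}(k_n)\right]\le\bigl(\alpha_{HM}(1-\bar p_M)+\alpha_H\bar p_M\bigr)\bigl(\alpha_M(1-\bar p_M)+\alpha\bar p_M\bigr)^{L}V(\mathbf{x}(k_n)),
\]
each human-open step contributing the averaged factor $\alpha_M(1-\bar p_M)+\alpha\bar p_M$ (case two vs.\ case four) and the human-closed step contributing $\alpha_{HM}(1-\bar p_M)+\alpha_H\bar p_M$ (case one vs.\ case three).

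\emph{From cycles to the full horizon.} Taking the expectation over $L$ with law~\eqref{eq:MarkovCompelexity} gives $\mathbb{E}[V(\mathbf{x}(k_{n+1}))]\le\rho\,\mathbb{E}[V(\mathbf{x}(k_n))]$ with $\rho$ the left side of~\eqref{eq:Stability}, hence $\mathbb{E}[V(\mathbf{x}(k_n))]\le\rho^{\,n}\mathbb{E}[V(\mathbf{x}(k_0))]$ by induction, using $\mathbb{E}[V(\mathbf{x}(k_0))]<\infty$ (from the initial-state hypothesis of Assumption~\ref{assump:controlGains}, or simply by taking $k_0=0$). A partial-product version of the same estimate bounds the intra-cycle terms: for $k_n\le t<k_{n+1}$, $\mathbb{E}[V(\mathbf{x}(t))\mid L,\mathbf{x}(k_n)]\le V(\mathbf{x}(k_n))\,g_t(L)$ with $G(L):=\sum_{t=k_n}^{k_{n+1}-1}g_t(L)$ satisfying $\mathbb{E}[G(L)]<\infty$ --- indeed the stability condition forces $\mathbb{E}[(\alpha_M(1-\bar p_M)+\alpha\bar p_M)^{L}]<\infty$, and $\mathbb{E}[L]<\infty$ since $M$ is geometric by~\eqref{eq:ProbM}, $\tau_H\le\tau_{\max}$, and $\tau_{SH}$ is light-tailed by~\eqref{eq:S-Hdelay}. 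Summing over cycles gives $\sum_{t=0}^{\infty}\mathbb{E}[V(\mathbf{x}(t))]\le\mathbb{E}[G(L)]\sum_{n\ge0}\mathbb{E}[V(\mathbf{x}(k_n))]\le\mathbb{E}[G(L)]\,\mathbb{E}[V(\mathbf{x}(k_0))]/(1-\rho)<\infty$, which is Definition~\ref{def:stability}.

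\emph{Main obstacle.} The delicate step is making the passage to the unconditional recursion $\mathbb{E}[V(\mathbf{x}(k_{n+1}))]\le\rho\,\mathbb{E}[V(\mathbf{x}(k_n))]$ rigorous: because the human control lag is Markovian, successive cycle lengths are coupled through the lag state carried across each epoch $k_n$, so $L$ is not independent of $\mathbf{x}(k_n)$ in general. I would resolve this by conditioning on the human-lag state at each renewal epoch and invoking that~\eqref{eq:DistriH-A}--\eqref{eq:MarkovCompelexity} describe $L$ with the lag chain in its stationary regime, so that $\mathbb{E}[(\alpha_M(1-\bar p_M)+\alpha\bar p_M)^{L}]$ is the correct average; equivalently, under the modeling convention that each cycle draws a fresh stationary lag sequence, the cycle lengths are i.i.d.\ and the renewal argument applies directly. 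The remaining ingredients --- iterating the scalar Assumption~\ref{assump:controlGains} bounds, the machine-loop factorization, the tail estimate for $L$, and the geometric summation --- are routine.
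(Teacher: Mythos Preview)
Your proposal is essentially the same cycle-cost/renewal argument as the paper's: partition the time axis at the closed-human-loop epochs $k_n$, iterate the one-step Lyapunov gains of Assumption~\ref{assump:controlGains} along each cycle, average over the i.i.d.\ machine-loop indicators to obtain the per-step factors $\Omega=\alpha_M(1-\bar p_M)+\alpha\bar p_M$ and $\Lambda=\alpha_{HM}(1-\bar p_M)+\alpha_H\bar p_M$, and then sum the resulting geometric series. The paper carries this out with slightly different bookkeeping (it bounds $\mathbb{E}[V(\mathbf{x}(k_n))]$ directly from $\mathbb{E}[V(\mathbf{x}(0))]$ via a product $\mathbb{E}[\Omega^{nL}]\Lambda^n$ and then reads off the ratio $\mathbb{E}[\Xi(n+1)]/\mathbb{E}[\Xi(n)]=\mathbb{E}[\Omega^L]\Lambda$), whereas you go straight to the one-cycle contraction $\mathbb{E}[V(\mathbf{x}(k_{n+1}))]\le\rho\,\mathbb{E}[V(\mathbf{x}(k_n))]$ and iterate --- but the substance is identical.

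Your flagged ``main obstacle'' (that the Markovian human lag couples successive cycle lengths, so $L$ is not strictly independent of $\mathbf{x}(k_n)$) is well spotted; the paper's proof does not address this point either and implicitly treats the cycle lengths as i.i.d.\ with the stationary law~\eqref{eq:MarkovCompelexity}, exactly the convention you propose to adopt. So you are, if anything, slightly more careful here than the original.
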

\begin{proof}(Main ideas)
	We investigate the stability condition of the WHMC system defined in \eqref{NLTI} by following the stability analysis framework adopting Lyapunov-like functions \cite{LiuStability,Assumption1,Assumption2}.\footnote{The methods in \cite{LiuStability,Assumption1,Assumption2} are not directly applicable, as the control process involves human control operations with a Markovian lag model.}
    Since human control is less frequent than machine control, it is convenient to focus on the plant events in which the actuator received human control commands.
    Therefore, the control process is divided by the closed-human-control-loop events. We name the time interval between consecutive closed human control loops as a cycle within the control process, and the sum of stochastic costs in a cycle is a cycle cost. Thus, the total cost of the control process is the sum of all cycle costs. The stability is equivalent to the bounded sum of all cycle costs, according to Definition~\ref{def:stability}.
    To prove the stability condition, we first analyze a stochastic cycle cost, where only case two and case four defined in Assumption~\ref{assump:controlGains} exist. It depends on the number of these two cases conditioned on the open loop probabilities and the time interval distribution presented in Section~\ref{sec:LoopAnalysis}. Then, we analyze the sum of stochastic cycle costs to the infinity cycles by further considering case one and case three defined in Assumption~\ref{assump:controlGains}. Finally, we derive the stability condition by making the sum of the stochastic cycle costs bounded as Definition~\ref{def:stability}. See Appendix~\ref{proofThrem1} for detailed proof.
\end{proof}

Sufficient conditions in stability analysis are critical because they provide guarantees that the system will be stable under the specific assumption. They are thus preferred since they give engineers and researchers a clear set of criteria to design and analyze their systems safely. 
The stability condition of the WHMC systems depends on the wireless communication parameters, i.e., the open loop probabilities of human and machine control $\Bar{p}_H$ and $\Bar{p}_M$, 
the control system parameters, i.e., $\alpha_{HM}$, $\alpha_{M}$, $\alpha_{H}$ and $\alpha$, and the Markov human state transition rule $\mathbf{M}$. In particular, $\Bar{p}_H$ and $\mathbf{M}$ impact the distribution of $L$, which further affect the stability condition. 
The condition indicates that if the WHMC systems exhibit high dynamics (i.e., the plant state changes significantly even with very small control input), the human operator experiences fatigue with a high control lag, and the open-loop probability is high, then the WHMC system becomes difficult to stabilize through collaboration. 

\subsubsection{Stability region}
The stability region in WHMC systems defines the range of system parameters that ensure stable operation, as per Theorem~\ref{threm:stability}.
The boundary of this stability region represents the critical limits beyond which the system may become unstable. The properties of this boundary are elucidated next.
\begin{corollary}\label{corollary:linearity}
\normalfont
Given the WHMC stability condition in Theorem~\ref{threm:stability},

(i) the stability region boundary in terms of $\alpha_{HM}$ and $\alpha_H$ is linear:
\begin{equation}\label{eq:linearEqual1}
    \alpha_{HM}  = - \frac{\Bar{p}_M }{\left( 1-\Bar{p}_M\right)}\alpha_H + \frac{1}{\mathbb{E}\left[ \Omega^{L} \right] \left( 1-\Bar{p}_M\right)},
\end{equation}
where $\Omega \triangleq \alpha_M \left( 1-\Bar{p}_M\right) + \alpha \Bar{p}_M$; 

(ii) the stability region boundary in terms of $\alpha_M$ and $\alpha$ is linear:
\begin{equation} \label{eq:linearEqual2}
    \alpha_M  = - \frac{\Bar{p}_M}{1-\Bar{p}_M}\alpha   + \frac{1}{\left( 1-\Bar{p}_M\right) \bar{L}}\sum_{l = 1}^{\bar{L}}\left( \bar{L} \Lambda \mathbb{P}\left[L = l\right]\right)^{-l},
\end{equation}
where $\bar{L}\triangleq \mathbb{E}\left[L\right]$ and $\Lambda \triangleq \alpha_{HM} \left( 1-\Bar{p}_M\right) + \alpha_H \Bar{p}_M$; 

(iii) the stability region boundaries, in terms of the other four possible pairs of control system parameters, i.e., $\alpha_{HM}$, $\alpha_{M}$, $\alpha_{H}$, and $\alpha$, are concave.
\end{corollary}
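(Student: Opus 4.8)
The plan is to read the stability region directly off Theorem~\ref{threm:stability}: its boundary is the equality version of \eqref{eq:Stability}, and the whole corollary follows by solving that single equation for one control parameter in terms of another while classifying the curvature of the resulting curve. It is convenient to abbreviate the two composite quantities in \eqref{eq:Stability} as the outer affine factor $\Lambda \triangleq \alpha_{HM}(1-\bar{p}_M)+\alpha_H\bar{p}_M$ and the inner argument $\Omega \triangleq \alpha_M(1-\bar{p}_M)+\alpha\bar{p}_M$, so that \eqref{eq:Stability} reads $g(\Omega)\,\Lambda < 1$ with $g(t)\triangleq\mathbb{E}\!\left[t^{L}\right]=\sum_{l\geq 1}\mathbb{P}[L=l]\,t^{l}$. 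Since $L\geq 1$ almost surely, $g$ is continuous, strictly increasing and convex on $\mathbb{R}_{\geq 0}$ with $g(0)=0$, hence a bijection of $\mathbb{R}_{\geq 0}$ onto $\mathbb{R}_{\geq 0}$ whose inverse $g^{-1}$ is strictly increasing and concave. The six parameter pairs split into three groups by how the pair sits relative to $\{\Lambda,\Omega\}$: both members in $\Lambda$ (part~(i)); both members in $\Omega$ (part~(ii)); one member in each (part~(iii)).

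For part~(i), the pair $(\alpha_{HM},\alpha_H)$ lives entirely in $\Lambda$, so $g(\Omega)$ is a fixed positive constant and the boundary $g(\Omega)\,\Lambda=1$ is affine in $(\alpha_{HM},\alpha_H)$; solving for $\alpha_{HM}$ and writing $\mathbb{E}[\Omega^{L}]$ for $g(\Omega)$ reproduces \eqref{eq:linearEqual1} verbatim. For part~(ii) the pair $(\alpha_M,\alpha)$ enters \eqref{eq:Stability} only through $\Omega$; because $g$ is a strictly increasing bijection, $g(\Omega)\Lambda<1$ is equivalent to $\Omega<\Omega^{\star}$ with $\Omega^{\star}\triangleq g^{-1}(1/\Lambda)$, and since $\Omega$ is affine in $(\alpha_M,\alpha)$ the boundary $\Omega=\Omega^{\star}$ is exactly the straight line with slope $-\bar{p}_M/(1-\bar{p}_M)$ and intercept $\Omega^{\star}/(1-\bar{p}_M)$, i.e.\ a relation of the form \eqref{eq:linearEqual2}. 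The remaining work is to put the root $\Omega^{\star}$ of the polynomial equation $g(\Omega^{\star})=1/\Lambda$ into the closed form appearing in \eqref{eq:linearEqual2}; the plan is to restrict the series defining $g$ to the dominant mass of $L$ (truncation at $l=1,\dots,\bar{L}$ with $\bar{L}=\mathbb{E}[L]$), distribute the unit budget across those terms, and invert term by term, which yields the $\tfrac{1}{\bar L}\sum_{l=1}^{\bar L}\big(\bar L\,\Lambda\,\mathbb{P}[L=l]\big)^{-l}$ expression.

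For part~(iii), fix the two parameters that are held constant; then the boundary $g(\Omega)\,\Lambda=1$ couples one free parameter through the affine factor $\Lambda$ and the other through the affine argument $\Omega$. Solving for the $\Lambda$-side parameter gives it as an affine image of $1/g(\Omega)$ with $\Omega$ affine in the $\Omega$-side parameter; implicit differentiation of $g(\Omega)\Lambda=1$ (equivalently, two differentiations of $1/g$) expresses the second derivative of the boundary curve through $g$, $g'$, $g''$ evaluated on the boundary, and its sign is read off from comparing $g''(\Omega)\,g(\Omega)$ with $2\big(g'(\Omega)\big)^{2}$, a comparison that is settled because $g$ is a power series with non-negative coefficients. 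Equivalently, solving for the $\Omega$-side parameter gives $\Omega=g^{-1}(1/\Lambda)$, an affine image of a concave-increasing function of $1/\Lambda$, so the concavity and monotonicity of $g^{-1}$ together with the convexity of $1/\Lambda$ in the $\Lambda$-side parameter fix the curvature; carrying this out for each of the four mixed pairs gives claim~(iii).

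The routine part is (i). I expect the genuine obstacles to be: for (ii), turning the implicit root $\Omega^{\star}=g^{-1}(1/\Lambda)$ into the explicit truncated-sum intercept of \eqref{eq:linearEqual2} in a way that is provably on the safe side of \eqref{eq:Stability}, since this needs a careful conservative bound on the positive root of a polynomial; and for (iii), pinning down the sign of the second derivative of the boundary curve — that is, establishing that it is concave rather than merely nonlinear — which hinges on the precise interplay of the convexity and monotonicity of $g(t)=\sum_{l}\mathbb{P}[L=l]\,t^{l}$ with the affine dependence on $\bar{p}_M$. Most of the effort should go into that curvature bookkeeping.
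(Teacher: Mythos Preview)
Your treatment of (i) is exactly the paper's: with $\alpha_M,\alpha$ fixed, $\mathbb{E}[\Omega^L]$ is a constant and $\mathbb{E}[\Omega^L]\Lambda=1$ is affine in $(\alpha_{HM},\alpha_H)$.

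For (ii) your linearity argument via the level set $\Omega=g^{-1}(1/\Lambda)=\text{const}$ is actually cleaner than what the paper does. The paper never invokes $g^{-1}$; it instead rewrites the boundary as $\sum_{l=1}^{\bar L}\Omega^{l}\,\mathbb P[L{=}l]=\sum_{l=1}^{\bar L}\tfrac{1}{\bar L\Lambda}$, treats this as ``a sum of $\bar L$ linear equations'' $\Omega^{l}\,\mathbb P[L{=}l]=\tfrac{1}{\bar L\Lambda}$, and reads off \eqref{eq:linearEqual2} term by term. Your truncate-and-distribute plan for the intercept is precisely this decomposition, so on the explicit constant you and the paper coincide; only your justification of the slope differs.

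For (iii) there is a genuine gap. You claim the sign of $2\bigl(g'(\Omega)\bigr)^{2}-g''(\Omega)g(\Omega)$ ``is settled because $g$ is a power series with non-negative coefficients,'' but that is false: take $g(t)=t+t^{10}$ (a legitimate $\mathbb{E}[t^L]$ up to normalization), for which $2(g')^{2}-g''g=2-50t^{9}+110t^{18}$ is negative near $t=0.8$. So $1/g$ need not be convex, and your alternative route---$g^{-1}$ concave-increasing composed with $1/\Lambda$ convex-decreasing---is equally inconclusive, since that composition has indeterminate curvature in general. The paper sidesteps this by reusing the monomial decomposition from (ii): for each single-term boundary $\Omega^{l}\,\mathbb P[L{=}l]=\tfrac{1}{\bar L\Lambda}$ it solves explicitly for $\alpha$ as an affine image of $\Lambda^{-1/l}$, computes $\tfrac{d^2}{d\alpha_H^2}\Lambda^{-1/l}=\tfrac{1}{l}\bigl(1+\tfrac{1}{l}\bigr)\bar p_M^{2}\,\Lambda^{-1/l-2}>0$ by hand, and then invokes ``sum of convex is convex.'' Whatever one thinks of that summation step, the point is that the paper's curvature argument lives at the level of individual monomials $\Omega^{l}$, where the sign is elementary, rather than at the level of the full generating function $g$; your plan would need either that monomial reduction or an extra hypothesis on the distribution of $L$ to close.
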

\begin{proof}
    See Appendix~\ref{proofCoro1}.
\end{proof}

As illustrated in Fig.~\ref{fig:StabilityRegion}, a linear stability region (e.g., Corollary~\ref{corollary:linearity} (i) and (ii)) means the boundary is governed by a linear function. It implies that any combination of the control system parameters within the region will maintain system stability, offering engineers substantial flexibility in parameter selection and system tuning without compromising stability. This implication is applicable to the convex stability region, where the boundary is governed by a convex function. In contrast, a concave stability region (e.g., Corollary~\ref{corollary:linearity} (iii)) has a boundary governed by a concave function. This indicates that while individual parameter sets within the region ensure stability, linear combinations of these parameters may not. For any stable parameter set, all parameter sets within the rectangular area defined by this set and the origin are also stable. In addition to control system parameters, communication system parameters also impact the stability region, which is presented in Section~\ref{sec:NumSimOfThrem1}.
\begin{figure}[t]
	\centering\includegraphics[width=3.5in]{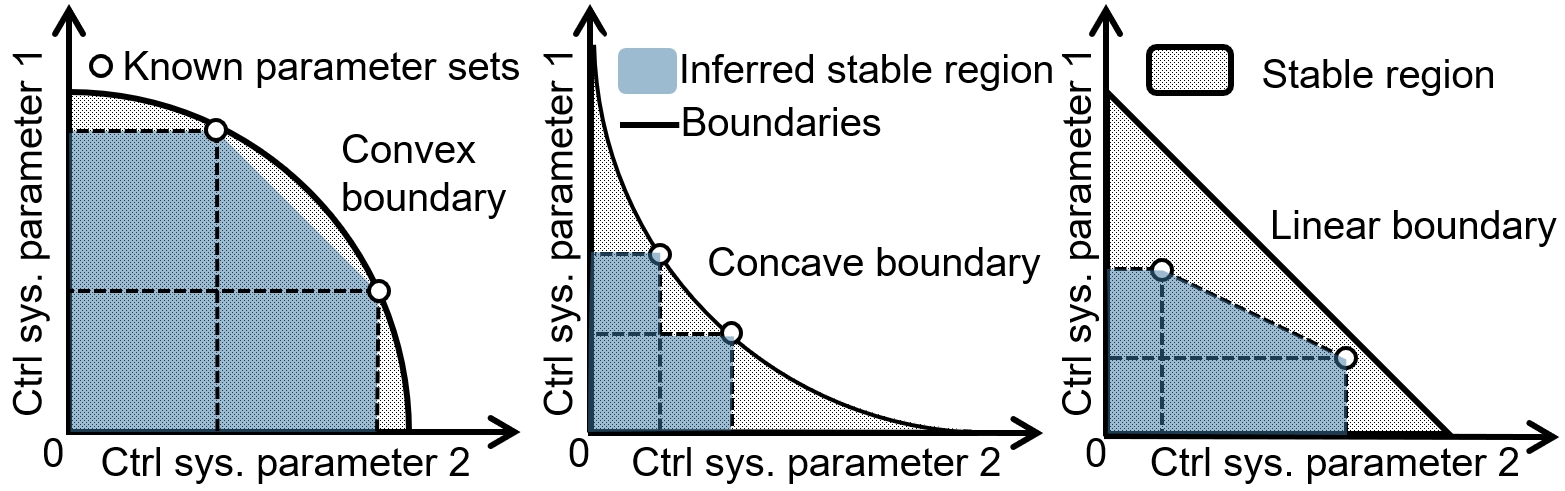}
	\vspace{-0.8cm}
	\caption{Illustration of the stability region boundaries.}
	\label{fig:StabilityRegion}
	\vspace{-0.7cm}
\end{figure}

\subsubsection{Special cases}
Given the stability condition of the general WHMC system in Theorem~\ref{threm:stability}, we examine the stability conditions for three specific cases.

\textbf{For an error-free channel}, assuming the communication channels are perfect, we have $p_H(t)=p_M(t)=0, \forall t$. The stability condition in \eqref{threm:stability} reduces to
\begin{equation} \label{eq:Error-Free}
   \mathbb{E}\left[ \alpha_M^{L} \right]\alpha_{HM} < 1,
\end{equation}
where $\mathbb{E}\left[ \alpha_M^{L} \right] = \sum_{k=1}^{\tau_{\text{max}}} \alpha_M^{k+1}v_k$ and $v_k$ defined in \eqref{eq:DistriHumanState} is determined by the human state transition matrix $\mathbf{M}$. In this case, the stability depends on $\alpha_{M}$, $\alpha_{HM}$, and $\mathbf{M}$. Since the communication channels are perfect, only human control loops may be open due to the human control lag. Thus, only the Lyapunov gains in cases one and two of Assumption~\ref{assump:controlGains}, i.e., $\alpha_{HM}$ and $\alpha_{M}$, play a role in this scenario. 

\textbf{Human control only}, assuming that the plant is only controlled by a human operator, i.e., the machine control loop is always open ($p_M(t)=1, \forall t$). The stability condition is
\begin{equation}\label{SpecialCaseHumanOnly}
   \mathbb{E}\left[ \alpha^{L} \right]\alpha_{H} < 1,
\end{equation}
where $\mathbb{E}\left[ \alpha^{L} \right] = \sum_{l=0}^{\infty} \alpha^{l} z_l$ and $z_l$ is defined in \eqref{eq:MarkovCompelexity}. In this case, the stability depends on $\alpha_{H}$, $\alpha$, and $\mathbf{M}$. Since the machine control loop is always open, only the Lyapunov gains in cases three and four of Assumption~\ref{assump:controlGains} are relevant. We note that if the human control lag is a constant, $L$ is still a random time interval due to the random SH delay.

\textbf{Machine control only}, assuming that the plant is only controlled by a machine controller, i.e., the human control loop is always open ($p_H(t)=1, \forall t$).
The stability condition of this case cannot be directly obtained from Theorem~\ref{threm:stability}, because the stochastic cycle-based approach in Theorem~\ref{threm:stability} is on the basis of closed human control loops. Modifications to the definition of stochastic cycles are required to analyze the stability condition. Our results are presented next: 
\begin{proposition}\label{prop:MachineStability}
    \normalfont
    The plant in Section~\ref{sec:sys} controlled solely by the machine is stochastically stable if
    \begin{equation}\label{SpecialCaseMachineOnly}
   \frac{\alpha_{M}}{\alpha}\mathbb{E}\left[ \alpha^{\hat{L}} \right] < 1,
    \end{equation}
where control system parameters $\alpha$ and $\alpha_{M}$ are defined in Assumption~\ref{assump:controlGains}; $\hat{L}$ is the time steps between the two consecutive closed machine control loops with the probability distribution of $\mathbb{P}\left[\hat{L} = l \right] = (1 - \Bar{p}_M) \Bar{p}_M^{l - 1}$.
\end{proposition}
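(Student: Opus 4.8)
The plan is to re-run the stochastic cycle-cost argument of Theorem~\ref{threm:stability}, but with the cycles now delimited by the \emph{closed machine control loops} rather than the closed human ones, as the statement anticipates. Since the human control loop is permanently open ($p_H(t)\equiv1$), Assumption~\ref{assump:controlGains} collapses to two possibilities at every time step: case two (machine loop closed, one-step gain $\alpha_M$) and case four (machine loop open, gain $\alpha$). Because a machine control loop occupies one time slot and is closed with probability $1-\bar p_M$ \emph{independently across slots} (the SC and CA SNRs being i.i.d.\ over time under the block-fading model), the gaps $\hat L_1,\hat L_2,\dots$ between successive closed machine loops are i.i.d.\ with $\mathbb{P}[\hat L=l]=(1-\bar p_M)\bar p_M^{\,l-1}$, and each $\hat L_n$ is independent of the plant state at the start of the $n$th cycle and of all earlier cycles.

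First I would index the cycles: let $s_0<s_1<\cdots$ be the slots at which the machine loop closes, let the $n$th cycle ($n\ge1$) be the block of slots from $s_{n-1}+1$ up to and including $s_n$ (with a leading block $[0,s_0]$ handled the same way), and set $V_n\triangleq V(\mathbf{x}(s_{n-1}+1))$. A cycle of length $\hat L_n=s_n-s_{n-1}$ consists of $\hat L_n-1$ open-machine steps followed by one closed-machine step, so iterating \eqref{eq:HMCsystemParameters} gives, pathwise, $V(\mathbf{x}(s_{n-1}+1+j))\le\alpha^{\,j}V_n$ for $0\le j\le\hat L_n-1$, and in particular $V_{n+1}\le\alpha_M\alpha^{\hat L_n-1}V_n=\tfrac{\alpha_M}{\alpha}\alpha^{\hat L_n}V_n$. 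Taking expectations and using the independence of $\hat L_n$ from $V_n$ yields $\mathbb{E}[V_{n+1}]\le\rho\,\mathbb{E}[V_n]$ with $\rho\triangleq\tfrac{\alpha_M}{\alpha}\mathbb{E}[\alpha^{\hat L}]=\tfrac{\alpha_M(1-\bar p_M)}{1-\alpha\bar p_M}$, hence $\mathbb{E}[V_n]\le C_0\,\rho^{\,n-1}$ for all $n\ge1$, where $C_0$ is a finite multiple of $\mathbb{E}[V(\mathbf{x}(0))]$ (its finiteness coming from bounding the leading block $[0,s_0]$).

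Next I would turn this geometric decay of the cycle-start values into the bound required by Definition~\ref{def:stability}. Summing the pathwise bound inside cycle $n$ gives $\sum_{t=s_{n-1}+1}^{s_n}V(\mathbf{x}(t))\le h(\hat L_n)\,V_n$ with $h(l)\triangleq\sum_{j=0}^{l-1}\alpha^{\,j}$; independence then gives $\mathbb{E}\!\big[\sum_{t=s_{n-1}+1}^{s_n}V(\mathbf{x}(t))\big]\le\mathbb{E}[h(\hat L)]\,\mathbb{E}[V_n]\le\mathbb{E}[h(\hat L)]\,C_0\,\rho^{\,n-1}$. Since the leading block plus the blocks $\{[s_{n-1}+1,s_n]\}_{n\ge1}$ partition $\{0,1,2,\dots\}$, adding up over all cycles gives $\sum_{t=0}^{\infty}\mathbb{E}[V(\mathbf{x}(t))]\le C'\sum_{n\ge0}\rho^{\,n}=C'/(1-\rho)<\infty$ for a finite constant $C'$, precisely when $\rho<1$, i.e.\ when \eqref{SpecialCaseMachineOnly} holds; by Definition~\ref{def:stability} the plant is stochastically stable.

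The routine geometric bookkeeping aside, the two points that need care — and where I expect the real work to be — are the finiteness facts that license the factorizations: (i) that $\hat L_n$ is genuinely independent of the past, which must be argued by invoking the i.i.d.\ block-fading SNRs at the renewal times $s_{n-1}$; and (ii) that $\mathbb{E}[h(\hat L)]$ and $C_0$ are finite, which (for $\alpha\ne1$) amounts to $\mathbb{E}[\alpha^{\hat L}]<\infty$, i.e.\ $\alpha\bar p_M<1$. The latter is not an extra assumption: the hypothesis $\rho<1$ already forces $\alpha\bar p_M<1$ whenever $\alpha_M>0$ (otherwise $\mathbb{E}[\alpha^{\hat L}]=\infty$), while the case $\alpha_M=0$ makes \eqref{SpecialCaseMachineOnly} trivially true and the cost collapse immediately. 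The only remaining loose ends are the explicit treatment of the leading block $[0,s_0]$ and the degenerate case $\bar p_M=1$ (no closed machine loop ever occurs), both of which can be dispatched directly.
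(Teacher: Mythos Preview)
Your proof is correct and follows essentially the same renewal/cycle-cost argument as the paper's own proof in Appendix~\ref{proofProp1}, partitioning time by the closed machine control loops and bounding the Lyapunov function geometrically across cycles. Your treatment is in fact more careful than the paper's on the points you flag---the independence of $\hat L_n$ from the past, the implicit finiteness requirement $\alpha\bar p_M<1$, and the edge cases---all of which the paper's proof glosses over.
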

\begin{proof}
	See Appendix~\ref{proofProp1}.
\end{proof}
In this case, the stability depends on $\alpha$, $\alpha_{M}$ and $\Bar{p}_M$. Since the human control loop is always open in this case, only the Lyapunov gains in cases two and four of Assumption~\ref{assump:controlGains} are applicable. \eqref{SpecialCaseMachineOnly} resemble exist results \cite{LiuStability}.

\subsection{Numerical Examples of the Stability Region} \label{sec:NumSimOfThrem1}
\begin{figure*}[t]
    \centering
    \includegraphics[width=6.0in]{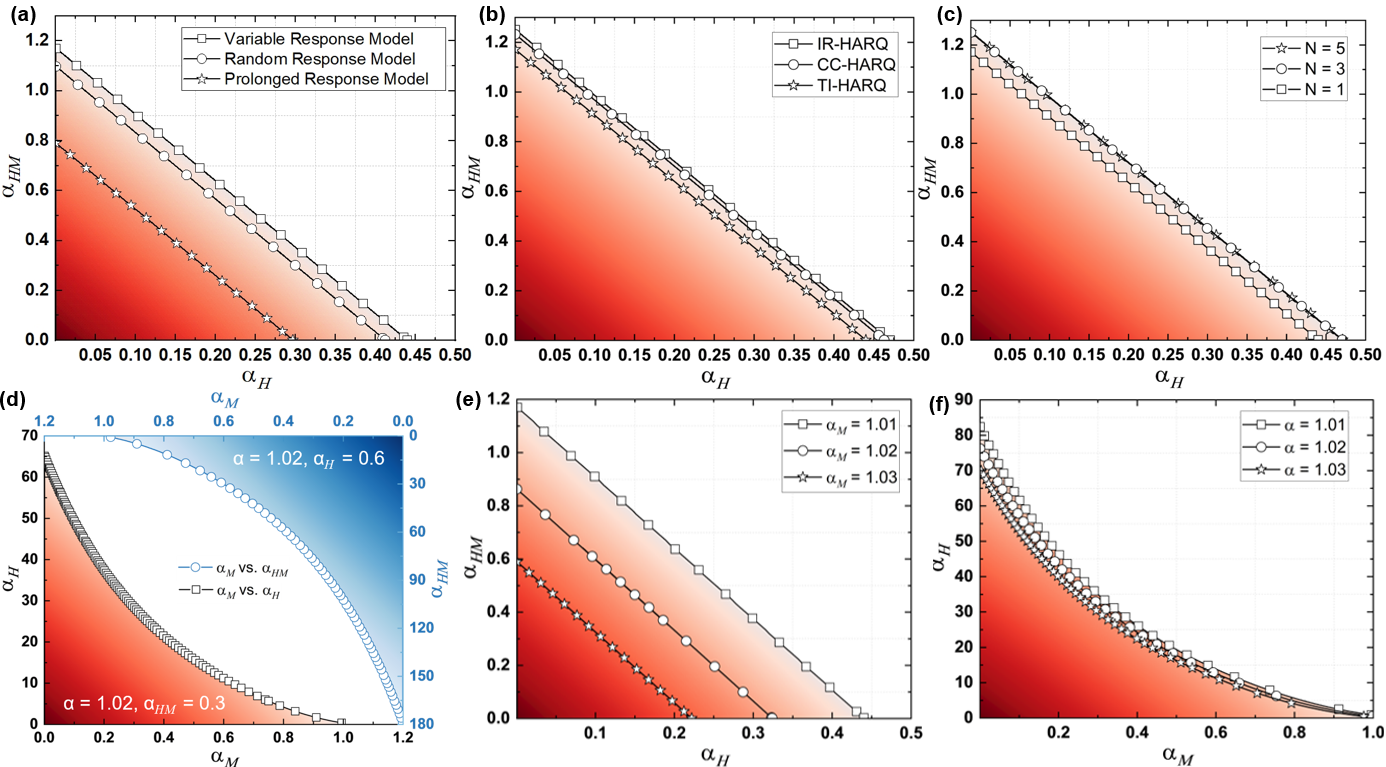}
    \vspace{-0.4cm}
    \caption{Numerical examples of the boundary of stability conditions: (a) Impacts of the human state transition matrix on the stability region in terms of $\alpha_{HM}$ and $\alpha_{H}$, where $\alpha=1.02$, $\alpha_{M}=1.01$, $N=3$, and TI-HARQ are adopted. (b) Impacts of HARQ schemes on the stability region in terms of $\alpha_{HM}$ and $\alpha_{H}$, where $\alpha=1.02$, $\alpha_{M}=1.01$, $\mathbf{M} = \mathbf{M}_l$, and $N=3$ are adopted. (c) Impacts of the maximum number of retransmissions on the stability region in terms of $\alpha_{HM}$ and $\alpha_{H}$, where $\alpha=1.02$, $\alpha_{M}=1.01$, $\mathbf{M} = \mathbf{M}_l$, and IR-HARQ are adopted. (d) The stability region in terms of $\alpha_{M}$ vs. $\alpha_{H}$ and $\alpha_{M}$ vs. $\alpha_{HM}$, where $\mathbf{M} = \mathbf{M}_l$, and IR-HARQ are adopted. (e) The stability region in terms of $\alpha_{HM}$ and $\alpha_{H}$, where $\alpha=1.02$, $\mathbf{M} = \mathbf{M}_l$, and IR-HARQ are adopted. (f) The stability region in terms of $\alpha_{M}$ and $\alpha_{H}$, where $\alpha_{HM}=0.3$, $\mathbf{M} = \mathbf{M}_l$, and IR-HARQ are adopted. Colourized areas are stable regions.} 
    \label{fig:expStabilityCondition}
    \vspace{-0.5cm}
\end{figure*}

We present numerical results to illustrate the stability region in terms of the communication, the control system, and the human model parameters, which show how these parameters affect the stability condition \eqref{eq:Stability} in Theorem~\ref{threm:stability}. The average channel gain is denoted as $\bar{h}$ and follows the free-space path loss model $\bar{h} = A(\frac{3\times10^8}{4\pi f_cd})^{d_e}$, where $A$ denotes the antenna gain; $f_c$ denotes the carrier frequency; $d$ denote the distance from the human operator or the machine to the plant; $d_e$ denote the path loss exponent \cite{PathLossModel}. The time-varying wireless channel power gains are generated from Rayleigh fading channel models, i.e., $h(t) \sim\!Exp(1)$. Given the transmission power $P_{\text{tx}}$ and the receiving noise power $\sigma^2$, the SNR of received packets in all channels are obtained from $\gamma(t) = \frac{\bar{h}h(t) P_{\text{tx}}}{\sigma^2}$, respectively. The communication parameters are summarized in Table~\ref{Para:NExp}. 
\begin{table}[!t] 
\footnotesize
\setlength\tabcolsep{13pt}
\centering
\caption{Communication Parameters in Simulation}
\vspace{-0.3cm}
\begin{tabular}{cc}
\hline\hline
\textbf{Items}                    & \textbf{Value} \\ \hline
\multicolumn{2}{l}{\textbf{Communication parameters}}  \\
\rowcolor[HTML]{EFEFEF} 
Code rate {[}bps{]}, $b/l_p$      &   2               \\
Packet length {[}symbols{]}, $l_p$ &      1500            \\
\rowcolor[HTML]{EFEFEF} 
Transmit power {[}dBm{]}, $P_\text{tx}$  &    23          \\
Background noise power  {[}dBm{]}, $\sigma^2$ &   -70         \\
\rowcolor[HTML]{EFEFEF} 
Maximum number of re/transmissions, $N$ &    $\{1, 3, 5\}$   \\
\multicolumn{2}{l}{\textbf{Free-space path loss model}} \\
\rowcolor[HTML]{EFEFEF} 
Antenna gain, $A$  &     4 \\
Carrier frequency [MHz], $f_c$   &  915 \\
\rowcolor[HTML]{EFEFEF} 
Distance from machine to plant [m], $d$   &   40 \\
Distance from human to plant [m], $d$  &        45 \\
\rowcolor[HTML]{EFEFEF} 
Path loss exponent, $d_e$   & 2.9 \\\hline\hline
\end{tabular}
\label{Para:NExp}
\vspace{-0.7cm}
\end{table}

The human control lag has two states $\mathcal{S} = \{ 5, 25\}$ (i.e., fast and slow) with the stationary probability distribution $(0.5,0.5)$ and the state transition matrix $\mathbf{M}$ can be one of the three cases below:
\begin{equation}
    \mathbf{M}_h\!=\!\begin{bmatrix} 0.9 & 0.1 \\ 0.1 & 0.9\end{bmatrix}, \mathbf{M}_e\!=\!\begin{bmatrix} 0.5 & 0.5 \\ 0.5 & 0.5\end{bmatrix}, \mathbf{M}_l\!=\!\begin{bmatrix} 0.1 & 0.9 \\ 0.9 & 0.1\end{bmatrix}.
\end{equation}
\(\mathbf{M}_h\) is a Prolonged Response Model, where the human operator tends to remain in a single state—either fast (low lag) or slow (high lag)—for extended periods. This reflects a tendency for the operator's reaction time to be consistently fast or slow, with infrequent transitions between these two states.
\(\mathbf{M}_e\) is a Random Response Model, where the human operator has an equal probability of staying in their current state or switching to the other, leading to unpredictable shifts between fast and slow reactions.
\(\mathbf{M}_l\) is a Variable Response Model, where the human operator frequently switches between fast and slow reactions, indicating high variability in response times.

Numerical results are illustrated in Fig.~\ref{fig:expStabilityCondition}. We select the pair of $\alpha_H$ and $\alpha_{HM}$ to show the impacts because this pair has the simplest linear relationship for demonstration (see Corollary~\ref{corollary:linearity}). Fig.~\ref{fig:expStabilityCondition}(a) illustrates the impacts of human model parameters on the stability region. In particular, a human operator with a variable response model shows the largest stability region, while a human operator with a prolonged response model has the smallest stability region. A human operator with a prolonged response model has a higher chance of instantly staying in a large lag state. Thus, to guarantee closed loop stability, more reliable communications are required. As shown in Corollary~\ref{corollary:linearity}(i), the slope of the linear stability region in Fig.~\ref{fig:expStabilityCondition}(a) depends on the expected probability of an open machine control loop $\Bar{p}_M$ defined in \eqref{ExPr_OpenMachineLoop}.

Fig.~\ref{fig:expStabilityCondition}(b) illustrates the impacts of three HARQ schemes on the stability region. Compared with TI-HARQ, WHMC systems with IR-HARQ and CC-HARQ schemes show a larger stability region due to the fact that the packet combining can significantly reduce the number of retransmissions by taking advantage of the accumulated SNRs. A WHMC system with the IR-HARQ scheme has the largest stability region because only incremental redundancies are retransmitted for each event of the erroneous packet.
Fig.~\ref{fig:expStabilityCondition}(c) illustrates the impacts of maximum re/transmission attempts on the stability region. We see that the system with HARQ schemes (i.e., $N > 1$) has a larger stability region than the system without retransmission (i.e., $N = 1$). As $N$ increases, the extension of the stability region becomes small; thus, $N = 3$ is commonly used in the numerical illustrations. 

In addition to the linear boundary, Fig.~\ref{fig:expStabilityCondition}(d) illustrates the concave boundaries in terms of $\alpha_{M}$ vs. $\alpha_{H}$ and $\alpha_{M}$ vs. $\alpha_{HM}$, where a small variation of $\alpha_{M}$ leads to a significant change in both $\alpha_{H}$ and $\alpha_{HM}$. This is because machine control attempts are more frequent than human control attempts, and the accumulated significance of $\alpha_{M}$ is significantly higher than the Lyapunov gains $\alpha_{H}$ and $\alpha_{HM}$ involving human control attempts.
Fig.~\ref{fig:expStabilityCondition}(e) illustrates stability regions in terms of the pair of $\alpha_{HM}$ and $\alpha_{H}$ with different $\alpha_{M}$. As $\alpha_{M}$ decreases, the stability region expands dramatically, highlighting the significant reduction of human control efforts to stabilize the system.
Fig.~\ref{fig:expStabilityCondition}(f) illustrates stability regions in terms of the pair of $\alpha_{M}$ and $\alpha_{H}$ with different $\alpha$. The stability region expands with decreasing $\alpha$. This is because a larger open loop Lyapunov gain $\alpha$ indicates greater effort required for both automatic machine and human control inputs.

\section{Proof of Concept Experiment} \label{sec:simulation}
In this section, we present a case study of WHMC to illustrate its advantage in control performance. The experiment data of the case study are recorded to estimate the control system and the
human model parameters, followed by the stability analysis of the case study to show the effectiveness of Theorem~\ref{threm:stability}.

\begin{figure}
    \centering
    \includegraphics[width=2.9in]{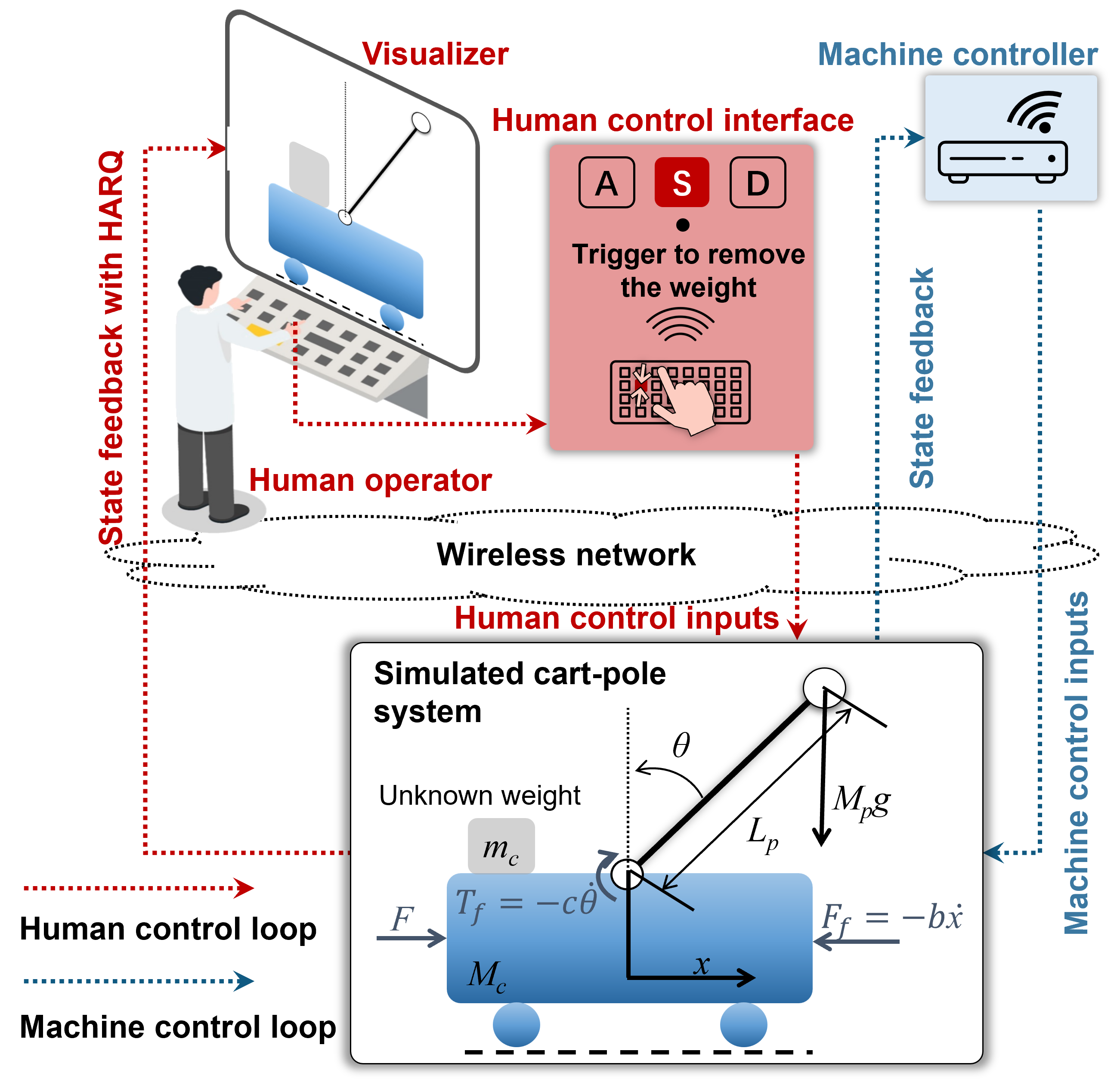}
    \vspace{-0.4cm}
    \caption{The cart-pole system to illustrate the WHMC.}
    \label{fig:CartPole}
    \vspace{-0.7cm}
\end{figure}

\subsection{Experiment Setups}\label{CaseStudy:Setup}
We build a WHMC system where a cart-pole system is simulated and controlled by a machine controller and a real human operator, as shown in Fig.~\ref{fig:CartPole}. The machine controller is implemented to control the applied force $F$ to the cart with an unknown weight $m_c$ for balancing the pole. The dynamic weight $m_c$ on the cart can be observed by the human operator who monitors the state of the simulated cart-pole system and uses a key `S' between `A' and `D' on a keyboard to intervene in the control of the cart-pole system to remove the dynamic weight on the cart. The dynamic weight can be seen as a catastrophic disturbance to the system, which cannot be handled by the machine controller designed without such knowledge. Therefore, such a control system has nonlinear dynamics and unknown disturbance to the machine controller, which is challenging without collaboration with a human operator. The IR-HARQ scheme is adopted with a maximum re/transmissions number of $N = 3$. Other communication parameters and the free-space path loss model are the same as Table~\ref{Para:NExp}.

\subsubsection{Cart-pole dynamics}
In the simulated cart-pole system, the mass of the pole is assumed to be concentrated at its end mass. The states of the cart-pole system consist of the position $x(t)$ and velocity $\dot{x}(t)$ of the cart, the angle $\theta(t)$ and angular velocity $\dot{\theta}(t)$ of the pole, and the unknown weight $m_c(t)$ on the cart, which is denoted as $\mathbf{x}(t) \triangleq (x(t),\dot{x}(t),\theta(t),\dot{\theta}(t),m_c(t))^\top$. The dynamics of the cart-pole system are governed by the non-linear dynamic equations in \eqref{eq:plantDynamics},
\begin{figure*}[!t]
\normalsize
\vspace{-0.3cm}
\begin{equation}\label{eq:plantDynamics}
    \begin{cases}
        2M_pgL_p\sin{\!\theta(t)}\!=\!(2I+M_pL_p^{2})\ddot{\theta}(t)\!+\!M_pL_p\cos{\!\theta(t)}\ddot{x}(t)\!+\!2c\dot{\theta}(t),\!&\text{Rotational dynamics (pendulum)},\\
        M_pL_p\sin{\!\theta(t)}\dot{\theta}(t)^2\!=\!(M_c\!+\!M_p\!+\!m_c(t))\ddot{x}(t)\!+\!b\dot{x}(t)\!+\!M_pL_p\cos{\!\theta(t)}\ddot{\!\theta}(t)\!-\!u_M(t),\!&\text{Horizontal force balance (cart)},
    \end{cases}
\end{equation}
\vspace{-0.5cm}
\end{figure*}
where $M_p = 2$ $kg$ and $M_c= 10$ $kg$ are the mass of the pole and cart, respectively; $g= 9.8$ $m/s^{2}$ is the gravitational acceleration; $L_p = 6$ $m$ is the length of the pole; $I = \frac{M_pL_p^{2}}{4}$ is the moment of inertia for a point mass in terms of the center of the pole; $u_M(t)$ is the applied force to the cart by the machine controller in \eqref{eq:MachineControl}; $c = b = 0.1$ are the damping coefficients for the pole and cart, respectively. 
For the dynamics of the unknown weight on the cart, $m_c(t)$, we assume that once the weight is successfully removed by the actuator remotely controlled by the human operator, it will reappear on the cart after a random time interval; otherwise, the unknown weight will remain on the cart continuously. Thus, $m_c(t)$ has the following updating rule
\begin{equation}\label{eq:NoiseUpdate}
    m_c(t+1) = \begin{cases}
        m_c(t) + u_H(t), & \text{for } m_c(t) \neq 0,\\
        m_c(t) + w(t), & \text{otherwise,}
    \end{cases}
\end{equation}
where $w(t) \in \{0, 5\}$ is randomly generated and $u_H(t)$ is the human control input. 
The sampling period is $T_s = 0.05$ $s$. $\ddot{x}(t)$ and $\ddot{\theta}(t)$ can be derived from \eqref{eq:plantDynamics} given $\mathbf{x}(t)$. Then, by leveraging $\mathbf{x}(t)$, $\ddot{x}(t)$, $\ddot{\theta}(t)$, $T_s$ and \eqref{eq:NoiseUpdate}, we can get $\mathbf{x}(t+1)$, indicating the proposed cart-pole system follows \eqref{NLTI}. The initial state is $\mathbf{x}(0) = (0,0,\frac{\pi}{6},0,5)^\top$.

\subsubsection{Control policies}
In this experiment, $m_c(t)$ is unknown to the machine controller, but all other states, parameters, and the system dynamics in \eqref{eq:plantDynamics} (excluding $m_c(t)$) are known. The machine control policy seeks to achieve a decaying angle as per $\theta(t+1) = \eta\theta(t)$ by applying force $u_M(t)$, where $\eta \in (0, 1)$.
Using the Euler approximation to update the angle, we obtain  
\begin{equation} \label{eq:ExpMachineControlEffects}
\theta(t+1) = \dot{\theta}(t +1)T_s + \theta(t) = \eta\theta(t).
\end{equation}
The updated angular velocity $\dot{\theta}(t+1)$ is also obtained by using the Euler approximation, i.e., $\dot{\theta}(t+1) = (\dot{\theta}(t)+T_s\ddot{\theta}(t))T_s$. A smaller $\eta$ will lead to a control policy enabling a faster speed of $\theta(t) \rightarrow 0$, which is 0.7 in the experiment.
By leveraging \eqref{eq:ExpMachineControlEffects} and \eqref{eq:plantDynamics} with $m_c(t) = 0$, the machine control policy can be obtained as \eqref{eq:ExpControl} to determine $\ddot{\theta}(t)$,
\begin{figure*}[!t]
\normalsize
\vspace{-0.2cm}
\begin{equation} \label{eq:ExpControl}
\begin{aligned}
    f_M \left(\mathbf{x}(t) \right) = &\frac{2M_pgL_p\sin(\theta(t))(M_c+M_p)}{M_pL_p\cos(\theta(t))}-\frac{2c\dot{\theta}(t)(M_c+M_p)}{M_pL_p\cos(\theta(t))}+b\dot{x}(t)-M_pL_p\dot{\theta}^2\sin(\theta(t))\\&
    -\frac{(\eta-1)\theta(t)\Gamma(t)}{T_s^2M_pL_p\cos(\theta(t))} + \frac{\dot{\theta}(t)\Gamma(t)}{T_sM_pL_p\cos(\theta(t))},
\end{aligned}
\end{equation}
\vspace{-0.6cm}
\hrulefill
\end{figure*}
where $\Gamma(t) \triangleq (M_c+M_p)(2I+M_pL_p^2)-(M_pL_p\cos(\theta(t)))^2$.
Recall that the human control policy is to remove the unknown weight on the cart if the human operator observes it through visual feedback. Thus, the human control policy is $f_{H}(\mathbf{x}(t)) = - m_c(t)$.

\subsection{WHMC Control Performance}
\begin{definition}[Collaborative Control Performance]
    \normalfont
   The control performance of a WHMC system at each time step is evaluated by a cost function $J$: $\mathbb{R}^{l_s}\rightarrow\mathbb{R}_{\geq0}$, which is defined as 
   \begin{equation}
       J(t) = \mathbf{x}(t)^\top \mathbf{P} \mathbf{x}(t),
   \end{equation}
   where $\mathbf{P}\in\mathbb{R}^{l_s \times l_s}$ is a positive diagonal matrix to individually penalize the states of interest. A smaller control cost indicates a better control performance.
\end{definition}

In the experiment described in Section~\ref{CaseStudy:Setup}, the objective of the WHMC system is to balance the pole (i.e., $\theta(t)$ is closely around the zero point). Thus, we are only interested in the angle of the pole, resulting in a cost function $J(t) = (\theta(t))^2$. The control cost of the machine control only case, the human control only case, and the WHMC case are shown in Fig.~\ref{fig:cost}. The human operator's objective is to remove the weight, not to balance the pole. Only the machine controller handles pole balancing. Without machine control inputs, the control cost increases. Both the WHMC and machine-only cases can reduce the cost over time, with their stability guaranteed, as will be further discussed in Section~\ref{CaseStudy:StabilityCheck}. Compared to the machine-only case, the WHMC case shows a faster decrease in cost, demonstrating the importance of WHMC.
\begin{figure}
    \centering
    \includegraphics[width=3.2in]{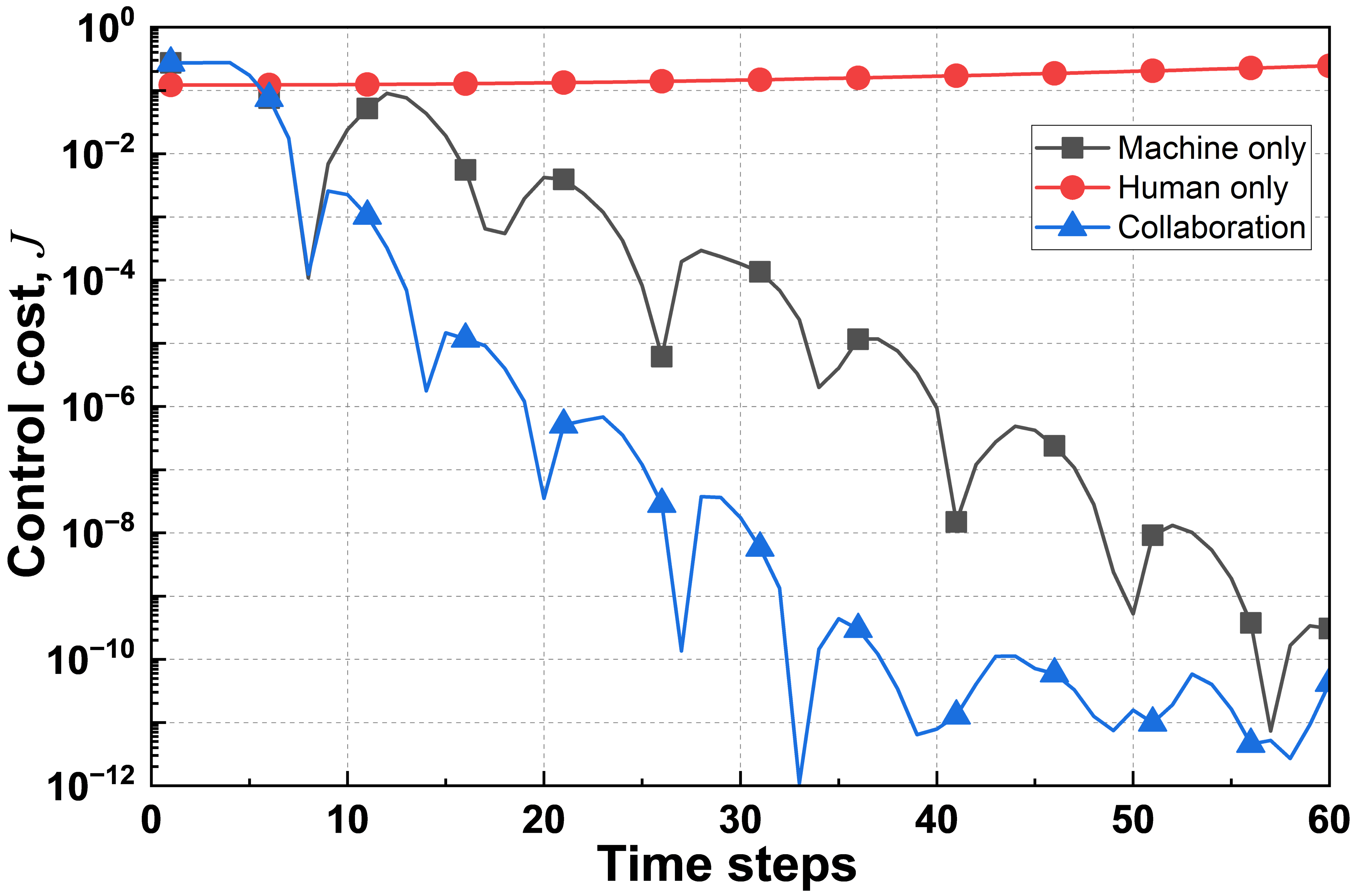}
    \vspace{-0.4cm}
    \caption{The control cost of the cart-pole system.}
    \label{fig:cost}
    \vspace{-0.5cm}
\end{figure}

\subsection{WHMC System Stability}\label{EstimationForStabilityAnalysis}
\subsubsection{Estimation of control system parameters}
Since the control objective is to balance the pole, the Lyapunov-like function $V(\cdot)$ is defined as
\begin{equation} \label{eq:DefinedV}
    V\left(\theta(t)\right) = \begin{cases}
        |\theta(t)|, & \text{for } |\theta(t)|\geq0.05,\\
        0, & \text{otherwise,}
    \end{cases}
\end{equation}
where the threshold of 0.05 is set to eliminate the impacts of uncontrolled $\dot{\theta}(t)$ on the control objective $\theta(t)$ when $\theta(t)$ reaches to the desired zero point.
To estimate the four control system parameters, i.e., $\alpha_{HM}$, $\alpha_{M}$, $\alpha_{H}$, and $\alpha$, we collect data by conducting the experiment in four cases, i.e., no control inputs, machine control only, human control only, and human-machine collaborative control, respectively.  
The parameter in each case is estimated by $\max{\frac{V(\theta(t+1))}{V(\theta(t))}}$ based on the corresponding data set,
where $V(\cdot)$ is defined in \eqref{eq:DefinedV}. The estimated four control system parameters are  $\alpha_{HM}= 0.5271$, $\alpha_{M}=0.7949$, $\alpha_{H}=1.0196$, and $\alpha = 1.0134$.

\subsubsection{Estimation of human model}
To reduce the estimation complexity, we quantize the human control lag to a two-state set of $\{3,7\}$, which corresponds to 0.15 $s$ and 0.35 $s$. The state transition matrix is estimated based on the maximum likelihood estimation approach, which is  
\begin{equation}
    \mathbf{M} = \begin{bmatrix} 
    0.2576 & 0.7424\\
    0.4404 & 0.5596
    \end{bmatrix}.
\end{equation}
The corresponding stationary probability distribution is $(0.3723, 0.6277)$, i.e., $\mathbb{P}\!\left[ \tau_{H}(t')\!=\!0.15 \right]\!=\!0.3723$ and $\mathbb{P}\!\left[ \tau_{H}(t')\!=\!0.35 \right]\!=\!0.6277$.

\subsubsection{Stability of the cart-pole system} \label{CaseStudy:StabilityCheck}
Based on the above estimation and parameters in Table~\ref{Para:NExp}, the left term of the stability condition in Theorem~\ref{threm:stability} is $0.3539<1$, demonstrating a stabilized WHMC system. 
In the machine-only control scenario, the left term of the stability condition \eqref{SpecialCaseMachineOnly} is $0.8594<1$, indicating a stochastically stable system. Conversely, in the human-control-only case, the left term of the stability condition \eqref{SpecialCaseHumanOnly} is $3.3088>1$, signifying an unstable system. This instability also explains the increasing control cost observed in Fig.~\ref{fig:cost}.

\section{Conclusions}\label{sec:conclusion}
We have developed a foundational WHMC model that integrates dual wireless loops for both machine and human control, addressing the intricate challenges associated with WHMC systems. By introducing a novel stochastic cycle-cost-based approach, we have derived a stability condition that accounts for the complexities of wireless communication, human behavior, and control system dynamics. Our approach has been validated through extensive numerical analysis and the creation of a new case study, demonstrating its practical effectiveness. These contributions offer a strong basis for advancing WHMC systems in increasingly complex and dynamic environments.

\section*{Acknowledgments}
The authors would like to express their sincere gratitude to Dr. Anuradha Annaswamy, Director of the Active-Adaptive Control Laboratory at MIT, for her valuable comments on this paper. Her insightful feedback and suggestions have been instrumental in improving the clarity and rigor of this work.

\begin{appendices}
\renewcommand{\thesectiondis}[2]{\Alph{section}:}
\section{Proof of Theorem~\ref{threm:stability}} \label{proofThrem1}
The time steps of the $n$th closed-loop human control is defined as $t = k_n$, as shown in Fig.~\ref{fig:CycleBasedAnalysis.png}. Let $m_C$ and $m_O$ denote the number of case two and case four defined in Assumption~\ref{assump:controlGains} between $t = k_n$ and $ t = k_{n+1}$, respectively.
Then we have 
\begin{equation}
   V(\mathbf{x}(k_n+l)) \leq \alpha_M^{m_C}\alpha^{m_O}V(\mathbf{x}(k_n)),
\end{equation}
and 
\begin{equation}
   \mathbb{E} \left[ V(\mathbf{x}(k_n+l)) \right] \leq \mathbb{E} \left[\alpha_M^{m_C}\alpha^{m_O} \right] \mathbb{E} \left[ V(\mathbf{x}(k_n))\right].
\end{equation}
Since
\begin{equation}
    \mathbb{E} \left[\alpha_M^{m_C}\alpha^{m_O} \mid m_C+m_O = l \right] =  \left(\alpha_M \left( 1-\Bar{p}_M\right) + \alpha \Bar{p}_M \right)^l,
\end{equation}
the sum of $V(\cdot)$ between the two adjacent closed human control loops has the following inequality
\begin{equation}
    \sum_{t = k_n}^{k_{n+1}-1} \mathbb{E} \left[ V(\mathbf{x}(t)) \right] \leq \sum_{l = 0}^{k_{n+1}-k_{n}-1} \Omega^l \mathbb{E} \left[ V(\mathbf{x}(k_n))\right],
\end{equation}
where 
$$\Omega \triangleq \alpha_M \left( 1-\Bar{p}_M\right) + \alpha \Bar{p}_M.$$
By further processing the above inequality, we have
\begin{equation}
    \sum_{t = k_n}^{k_{n+1}-1} \mathbb{E} \left[ V(\mathbf{x}(t)) \right] \leq \frac{1-\Omega^{L-1}}{1-\Omega} \mathbb{E} \left[ V(\mathbf{x}(k_n))\right].
\end{equation}
Then, 
\begin{equation} \label{eq:infSum}
    \sum_{t = 0}^{\infty} \mathbb{E} \left[ V(\mathbf{x}(t)) \right] \leq \sum_{n=0}^{\infty} \frac{1-\Omega^{L-1}}{1-\Omega} \mathbb{E} \left[ V(\mathbf{x}(k_n))\right].
\end{equation}
Let $h_C$ and $h_O$ denote the numbers of case one and case three defined in Assumption~\ref{assump:controlGains} between $t = 0$ and $ t = k_{n}$, respectively. In this time interval, $\hat{m}_C$ and $\hat{m}_O$ denote the numbers of case two and case four, respectively. Then, 
\begin{equation} \label{eq:InitialToKn}
   \mathbb{E} \left[ V(\mathbf{x}(k_n)) \right] \leq \mathbb{E} \left[\alpha_M^{\hat{m}_C} \alpha^{\hat{m}_O} \alpha_{HM}^{h_C} \alpha_{H}^{h_O} \right] \mathbb{E} \left[ V(\mathbf{x}(0))\right].
\end{equation}
It can be further processed as
\begin{equation}
   \mathbb{E} \left[ V(\mathbf{x}(k_n)) \right] \leq \mathbb{E} \left[\Omega^{nL} \right] \mathbb{E} \left[ \alpha_{HM}^{h_C} \alpha_{H}^{h_O} \right] \mathbb{E} \left[ V(\mathbf{x}(0))\right].
\end{equation}
Since
\begin{equation}
  \mathbb{E} \left[ \alpha_{HM}^{h_C} \alpha_{H}^{h_O} \mid h_C + h_O = n\right]  = \left(\alpha_{HM} \left( 1-\Bar{p}_M\right) + \alpha_H \Bar{p}_M \right)^{n},
\end{equation}
we have
\begin{equation} \label{eq:initialSum}
   \mathbb{E} \left[ V(\mathbf{x}(k_n)) \right] \leq \mathbb{E} \left[\Omega^{nL} \right] \Lambda^{n} \mathbb{E} \left[ V(\mathbf{x}(0))\right],
\end{equation}
where 
$$\Lambda \triangleq \alpha_{HM} \left( 1-\Bar{p}_M\right) + \alpha_H \Bar{p}_M.$$
By leveraging \eqref{eq:infSum} and \eqref{eq:initialSum}, we have
\begin{equation}
    \sum_{t = 0}^{\infty} \mathbb{E} \left[ V(\mathbf{x}(t)) \right] \leq \sum_{n=0}^{\infty} \frac{1-\Omega^{L-1}}{1-\Omega} \mathbb{E} \left[\Omega^{nL} \right] \Lambda^{n} \mathbb{E} \left[ V(\mathbf{x}(0))\right].
\end{equation}
Since $\mathbb{E}\left[V (\mathbf{x}(0))\right] < \infty$, to make $\sum_{t=0}^{\infty} \mathbb{E}\left[V\left(\mathbf{x}(t)\right)\right] < \infty$, we need
\begin{equation} \label{eq:Reqired}
   \sum_{n=0}^{\infty} \frac{1-\Omega^{L-1}}{1-\Omega} \mathbb{E} \left[\Omega^{nL} \right] \Lambda^{n} < \infty.
\end{equation}
Let 
\begin{equation}
   \mathbb{E}\left[\Xi (n)\right] \triangleq \mathbb{E}\left[\frac{1-\Omega^{L-1}}{1-\Omega} \Omega^{nL} \right]  \Lambda^{n}.
\end{equation}
Then, we have
\begin{equation}
   \mathbb{E}\left[\Xi (n+1)\right] \triangleq \mathbb{E}\left[\frac{1-\Omega^{L-1}}{1-\Omega} \Omega^{(n+1)L} \right] \Lambda^{n+1},
\end{equation}
and
\begin{equation} \label{eq:Xi}
   \mathbb{E}\left[\Xi (n+1)\right] = \mathbb{E}\left[ \Omega^{L} \right] \Lambda \mathbb{E}\left[\Xi (n)\right].
\end{equation}
To satisfy \eqref{eq:Reqired}, \eqref{eq:Stability} is derived from \eqref{eq:Xi} as the stability condition of the WHMC system.

\section{Proof of Corollary~\ref{corollary:linearity}} \label{proofCoro1}
According to \eqref{eq:Stability}, the bound of the stability region is 
    \begin{equation} \label{eq:linearEqual0}
        \mathbb{E}\left[ \Omega^{L} \right]\Lambda = 1.
    \end{equation}
    
    (i) When control system parameters $\alpha_M$ and $\alpha$ are fixed, by further processing \eqref{eq:linearEqual0}, we have
    \begin{equation}
        \alpha_{HM}  = \frac{1}{\mathbb{E}\left[ \Omega^{L} \right] \left( 1-\Bar{p}_M\right)} - \frac{\Bar{p}_M }{\left( 1-\Bar{p}_M\right)}\alpha_H,
    \end{equation}
    where the linearity between $\alpha_{HM}$ and $\alpha_H$ is showcased.

    (ii) When control system parameters $\alpha_{HM}$ and $\alpha_H$ are fixed, by further processing \eqref{eq:linearEqual0}, we have
    \begin{equation}\label{linearEquations}
         \sum_{l = 1}^{\bar{L}} \Omega^{l}\mathbb{P}\left[L = l\right] = \sum_{l = 1}^{\bar{L}}\frac{1}{\bar{L}\Lambda},
    \end{equation}
    which is a sum of $\bar{L}\triangleq \mathbb{E}\left[L\right]$ linear equations and can be represented as
    \begin{equation}
        \alpha_M  = - \frac{\Bar{p}_M}{1-\Bar{p}_M}\alpha   + \frac{1}{\left( 1-\Bar{p}_M\right)\bar{L}}\sum_{l = 1}^{\bar{L}}\left( \bar{L} \Lambda \mathbb{P}\left[L = l\right]\right)^{-l}.
    \end{equation}
    Thus, the linearity between $\alpha_M$ and $\alpha$ is demonstrated.
    
    (iii) For any other possible pairs of two control system parameters, we can take one of $\bar{L}$ equations from \eqref{linearEquations} and prove that it is convex in terms of the pairs other than those in (i) and (ii). In particular, we have
    \begin{equation}
        \Omega^{l}\mathbb{P}\left[L = l\right] = \frac{1}{\bar{L}\Lambda}.
    \end{equation}
    Then, it can be represented as
    \begin{equation} \label{fullLinearEq}
        (\alpha_M \left( 1-\Bar{p}_M\right) + \alpha \Bar{p}_M)^{l} = \frac{\frac{1}{\bar{L}\mathbb{P}\left[L = l\right]}}{(\alpha_{HM} \left( 1-\Bar{p}_M\right) + \alpha_H \Bar{p}_M)}.
    \end{equation}
    We take the pair of $\alpha$ and $\alpha_{H}$ for example, given the fixed $\alpha_{M}$ and $\alpha_{HM}$. Then, \eqref{fullLinearEq} can be represented as
    \begin{equation} \label{oneLinearEq}
        \alpha\!=\!\frac{1}{\Bar{p}_M}\!\!\left(\!\!\left(\!\!\frac{\frac{1}{\bar{L}\mathbb{P}\left[L = l\right]}}{(\alpha_{HM}\!\left( 1\!-\!\Bar{p}_M\right) + \alpha_H \Bar{p}_M)}\!\!\right)^{\!\!\frac{1}{l}}\!\!-\alpha_M \!\left(1\!-\!\Bar{p}_M\right)\!\!\right)\!\!.
    \end{equation}
    The first-order derivative $\dot{\alpha}(\alpha_{H})$ is 
    \begin{equation}
        \dot{\alpha}(\alpha_{H}) =-\frac{(\alpha_{HM} \left( 1-\Bar{p}_M\right) + \alpha_H \Bar{p}_M)^{-1-\frac{1}{l}}}{l(\bar{L}\mathbb{P}\left[L = l\right])^{\frac{1}{l}}}.
    \end{equation}
    The second-order derivative $\ddot{\alpha}(\alpha_{H})$ is
    \begin{equation}
        \ddot{\alpha}(\alpha_{H}) = \frac{(1+\frac{1}{l})\Bar{p}_M(\alpha_{HM} \left( 1-\Bar{p}_M\right) + \alpha_H \Bar{p}_M)^{-2-\frac{1}{l}}}{l(\bar{L}\mathbb{P}\left[L = l\right])^{\frac{1}{l}}}.
    \end{equation}
    We note that $\Bar{p}_M \in (0,1)$, $\alpha_{H} \geq 0$ and $\alpha_{HM} \geq 0$. Thus, $\ddot{\alpha}(\alpha_{H}) \geq 0$. Then, \eqref{oneLinearEq} is convex and the sum of convex functions \eqref{linearEquations} is convex and has a concave stability boundary. Other pairs other than those in (i) and (ii) can also be proved following the above analysis.

\section{Proof of Proposition~\ref{prop:MachineStability}} \label{proofProp1}
We also leverage the stochastic cycle-based approach in Section~\ref{sec:StochasticAnalysis}. Assume the time steps of the two adjacent closed machine control loops are $k_n$ and $k_{n+1}$. Then, we have 
 \begin{equation}
   V(\mathbf{x}(k_n+l)) \leq \alpha^{l}V(\mathbf{x}(k_n)),
\end{equation}
and 
\begin{equation}
   \mathbb{E} \left[ V(\mathbf{x}(k_n+l)) \right] \leq \alpha^{l} \mathbb{E} \left[ V(\mathbf{x}(k_n))\right].
\end{equation}
The sum of $V(\cdot)$ between the two adjacent closed machine control loops has the following inequality
\begin{equation}
    \sum_{t = k_n}^{k_{n+1}-1} \mathbb{E} \left[ V(\mathbf{x}(t)) \right] \leq \sum_{l = 0}^{k_{n+1}-k_{n}-1} \alpha^{l} \mathbb{E} \left[ V(\mathbf{x}(k_n))\right].
\end{equation}
By further processing the above inequality, we have
\begin{equation}
    \sum_{t = k_n}^{k_{n+1}-1} \mathbb{E} \left[ V(\mathbf{x}(t)) \right] \leq \frac{1-\alpha^{\hat{L}-1}}{1-\alpha} \mathbb{E} \left[ V(\mathbf{x}(k_n))\right].
\end{equation}
Then, 
\begin{equation}
    \sum_{t = 0}^{\infty} \mathbb{E} \left[ V(\mathbf{x}(t)) \right] \leq \sum_{n=0}^{\infty} \frac{1-\alpha^{\hat{L}-1}}{1-\alpha} \mathbb{E} \left[ V(\mathbf{x}(k_n))\right].
\end{equation}
Since 
\begin{equation} 
   \mathbb{E} \left[ V(\mathbf{x}(k_n)) \right] \leq \alpha_M^{n}\mathbb{E} \left[\alpha^{n\left(\hat{L} - 1 \right)}\right] \mathbb{E} \left[ V(\mathbf{x}(0))\right],
\end{equation}
we have
\begin{equation}
    \sum_{t = 0}^{\infty} \mathbb{E} \left[ V(\mathbf{x}(t)) \right] \leq \sum_{n=0}^{\infty} \frac{1-\alpha^{\hat{L}-1}}{1-\alpha}  \alpha_M^{n}\mathbb{E} \left[\alpha^{n \left(\hat{L} - 1 \right)}\right] \mathbb{E} \left[ V(\mathbf{x}(0))\right].
\end{equation}
Since $\mathbb{E}\left[V (\mathbf{x}(0))\right] < \infty$, to make $\sum_{t=0}^{\infty} \mathbb{E}\left[V\left(\mathbf{x}(t)\right)\right] < \infty$, we need
\begin{equation} \label{eq:Reqired2}
     \sum_{n=0}^{\infty} \frac{1-\alpha^{\hat{L}-1}}{1-\alpha}  \alpha_M^{n}\mathbb{E} \left[\alpha^{n \left(\hat{L} - 1 \right)}\right] < \infty.
\end{equation}
Let
\begin{equation}
   \mathbb{E}\left[\Xi (n)\right] \triangleq \mathbb{E}\left[\frac{1-\alpha^{\hat{L}-1}}{1-\alpha} \alpha^{n \left(\hat{L} - 1 \right)} \right] \alpha_M^{n}.
\end{equation}
Then we have
\begin{equation}
   \mathbb{E}\left[\Xi (n+1)\right] = \mathbb{E}\left[\frac{1-\alpha^{\hat{L}-1}}{1-\alpha} \alpha^{(n+1) \left(\hat{L} - 1 \right)} \right] \alpha_M^{n+1}.
\end{equation}
Then we have the following equation
\begin{equation} \label{eq:Xi2}
\mathbb{E}\!\left[\Xi (n\!+\!1)\right]\!=\!\alpha_M \mathbb{E}\left[\alpha^{\hat{L}-1} \right]\!\!\mathbb{E}\!\left[\Xi (n)\right]\!=\!\frac{\alpha_{M}}{\alpha}\mathbb{E}\!\left[ \alpha^{\hat{L}} \right]\!\!\mathbb{E}\!\left[\Xi (n)\right].
\end{equation}
The stability condition in \eqref{SpecialCaseMachineOnly} is derived from \eqref{eq:Xi2} to satisfy \eqref{eq:Reqired2}.
\end{appendices}

    \balance
	\ifCLASSOPTIONcaptionsoff
	\newpage
	\fi

	\bibliographystyle{IEEEtran}

\end{document}